\documentclass[11pt,a4paper]{article}

\usepackage[utf8]{inputenc}
\usepackage{amsmath, amssymb, amsthm}
\usepackage{physics}
\usepackage{hyperref}
\usepackage{authblk}
\usepackage[colorinlistoftodos]{todonotes}
\usepackage[titletoc,title]{appendix}
\usepackage{comment}

\hypersetup{
    colorlinks,    
    linkcolor={red!50!black},
    citecolor={},
    urlcolor={blue!80!black}
}

\newtheoremstyle{break}
  {12pt}
  {12pt}
  {\slshape}
  {}
  {\bfseries}
  {}
  {\newline}
  {}

\title{\textbf{Hamiltonian engineering via pulses: beyond group averaging}}
\author[1]{Ivan Beschastnyi}
\author[1]{Lucah Patel}
\author[1]{David Tinoco}
\affil[1]{INRIA, MCTAO, France}
\date{\today}

\newtheorem{proposition}{Proposition}[section]

\usepackage[english]{babel}

\usepackage[letterpaper,top=2cm,bottom=2cm,left=3cm,right=3cm,marginparwidth=1.75cm]{geometry}

\usepackage{amsmath}
\usepackage{amsthm}
\usepackage{amssymb}
\usepackage{graphicx}
\usepackage{mathtools}
\usepackage{tikz-cd}
\usepackage{braket}

\newtheorem{theorem}{Theorem}[section]
\newtheorem{lemma}[theorem]{Lemma}

\theoremstyle{definition}
\newtheorem{definition}[theorem]{Definition}

\theoremstyle{remark}
\newtheorem{remark}[theorem]{Remark}

\numberwithin{equation}{section}

  \DeclareMathOperator{\Ad}{Ad}

 \newcommand{\bbar}[1]{\setbox0=\hbox{$#1$}\dimen0=.2\ht0 \kern\dimen0 \overline{\kern-\dimen0 #1}}

 \DeclareMathOperator{\End}{\ensuremath{\mathcal{E}\kern-.125em\mathpzc{nd}}}

 \DeclareMathOperator{\Proj}{\mathcal{P}\kern-.125em\mathpzc{roj}}

 \renewcommand{\setminus}{\smallsetminus}

 \newcommand{\udot}{\ensuremath{{\lower .183333em \hbox{\LARGE \kern -.05em$\cdot$}}}}

  \DeclareMathOperator{\Aff}{Aff}
  \DeclareMathOperator{\Lie}{Lie}

 \newcommand{\cH}{\mathcal{H}}

\newcommand{\cR}{\mathcal{R}}

\newcommand{\C}{\mathbb{C}}

 \newcommand{\R}{\mathbb{R}}

  \newcommand{\Z}{\mathbb{Z}}

 \newcommand{\fg}{\mathfrak{g}}
  
  \newcommand{\fh}{\mathfrak{h}}

 \newcommand{\p}{\partial}

  \newcommand{\fsu}{\mathfrak{su}}
  \newcommand{\fsl}{\mathfrak{sl}}

 \DeclareMathOperator{\sign}{sign}
 
 \DeclareMathOperator{\conv}{conv}
 \DeclareMathOperator{\spann}{span}

\begin{document}

\maketitle

\begin{abstract}
We develop a geometric framework for Hamiltonian engineering in finite-dimensional quantum systems using ideal control pulses. Starting from a bilinear Schrödinger equation with unbounded control amplitudes, we construct the closed pulse group and use extensions of control systems and Filippov's relaxation theorem to obtain a family of effective Hamiltonians given by the convex hull of the drift's adjoint orbit plus the Lie algebra of the pulse group. The geometry of this orbitope describes possibilities beyond group averaging. Using the isotypic decomposition of the adjoint representation, we characterize its affine hull and show that the group average lies in its interior. This yields locally accessible families of effective Hamiltonians around the invariant part of the drift. We apply the framework to recover the necessary and sufficient condition for dynamical decoupling from arbitrary interactions with a finite-dimensional bath. For connected abelian pulse groups, we describe the relevant representation decomposition through restricted roots. Finally, for qubit networks with identical pairwise couplings, we give a qualitative characterization and quantitative estimation of effective Hamiltonians that can be generated using our strategy.
\end{abstract}

\section{Introduction}
\label{sec:intro}

Many practical quantum systems can be described through finite-dimensional control systems of the form
\begin{equation}
\label{eq:unitary_control_system}
    i\dot U = (H_0 +\sum_{j=1}^m u_j H_j )U, \qquad U(0) = I.
\end{equation}
Here $U\in SU(n)$ is a unitary operator, $H_0,\dots,H_m$ are traceless Hermitian matrices and $(u_1,\dots,u_m)\in\R^m$ are controls. Note that we assume the control functions can take arbitrarily large values. By applying very large controls over very small intervals of time we achieve in the limit pulses -- instantaneous unitary transformations. No ideal pulses, of course, exist, but depending on the initial Hamiltonians and real control limitations, the practical non-ideal pulses can sometimes be very close to the ideal ones. Therefore, it makes sense to study what can and cannot be achieved by using pulses in combination with the more standard classes of control strategies.

Pulse strategies have been employed in many situations, in particular in dynamical decoupling, which is used to suppress unwanted interactions between two coupled quantum systems~\cite{ViolaKnillLloyd1999}. Examples include separating a system from the surrounding bath or disconnecting two connected qubits via a pure control strategy, without additional engineering solutions. Such applications often modify the drift Hamiltonian $H_0$ by averaging out certain terms. However, as we will see, one can use pulses to do much more and generate a wide class of effective Hamiltonians depending on the drift and pulses available. The main question of this paper is:
\begin{center}
    \textit{What Hamiltonians can be generated by combining bounded controls and pulses?}
\end{center}
Using standard techniques from geometric control theory and a little bit of representation theory we give basically a complete answer to this question. Our main tool is the technique of extensions of control systems. In the case of right-invariant problems on Lie groups, this technique was developed in~\cite{JurdjevicKupka1981Accessibility,JurdjevicKupka1981Semisimple}, and has also been used in many other concrete situations~\cite{AgrachevBaryshnikovSarychev2016}. Its advantage is that it allows us very quickly to cover all the possibilities pulse strategies have to offer and reduce the problem to a representation-theoretic one, where algebraic methods can be employed efficiently. This is in contrast with many previous papers that relied on Magnus expansions for similar tasks~\cite{HaeberlenWaugh1968,ViolaKnillLloyd1999}, which arguably require more computational work.

Another difference is that in our case the set of all pulses will form a continuous group, whereas many works in dynamical decoupling use finite groups or discrete sets~\cite{ViolaKnillLloyd1999,ReadEtAl2025Platonic,ReadEtAl2025Factorization,TripathiEtAl2025,NguyenEtAl2026}. In the Hamiltonian simulation literature~\cite{BennettEtAl2002,WocjanEtAl2002}, the authors of~\cite{WocjanEtAl2002} develop local-control constructions for simulating pair interactions, including decoupling and time reversal, and derive bounds on simulation overhead for a fixed finite family of pulses. Other recent examples include sequences based on finite Platonic rotation groups~\cite{ReadEtAl2025Platonic} and their factorization into subgroups to exploit existing Hamiltonian symmetries~\cite{ReadEtAl2025Factorization} as well as robust constructions~\cite{ChoiEtAl2020}. Finite group constructions were also used experimentally demonstrated qudit decoupling using the Heisenberg--Weyl group~\cite{TripathiEtAl2025} and selective decoupling of local interactions using subgroups of the projective Pauli group and classical additive codes~\cite{NguyenEtAl2026}. We could, in theory, restrict our analysis to discrete families of pulses as well, but 
this would exclude certain strategies that seem to be perfectly realizable in practice and lead to an incomplete answer to the main question of the paper. In addition, the standard averaging of the drift Hamiltonian over a group action results only in a small subset of possible Hamiltonians that can be engineered. The ability to generate a large class of Hamiltonians is important, among other things, in quantum simulation, where one of the main tools is Trotterization~\cite{Lloyd1996}. It is used to simulate a complex Hamiltonian by switching between simpler ones usually realizable on a quantum computer. But it might very well be possible to generate quite complex Hamiltonians through a suitable pulse strategy, without splitting it into smaller Hamiltonians first. This can reduce the time and complexity of the corresponding quantum algorithms by reducing the depth of quantum circuits.

Closer to our approach is the work~\cite{KhanejaBrockettGlaser2001}, where the authors formulate pulse design as time-optimal control on compact Lie groups and relate fast control to motion on homogeneous spaces. Their analysis connects the time needed to synthesize a propagator with adjoint-orbit geometry. One could use our method as a step towards proving controllability results, even though this is not the goal of the paper. The Lie algebraic foundations of finite-dimensional quantum control, including controllability criteria and decompositions of quantum dynamics, are presented systematically in~\cite{DAlessandro2021}. Representation-theoretic methods have been used for proving general controllability (see, for example,~\cite{DAlessandro2024}).

It should be noted that the goal of this work is to explore which Hamiltonians can be engineered theoretically, setting aside practical difficulties associated with non-ideal pulses and the choice of appropriate time scales. We give references for concrete realizations of individual steps of our construction throughout the text. 

Let us give a bird's-eye view of Hamiltonian engineering for the system~\eqref{eq:unitary_control_system} and the structure of the paper. In this work we pull back the Lie algebra structure from the Lie algebra of traceless skew-Hermitian matrices $\fsu(n)$ to the space of traceless Hermitian matrices of order $n$. Indeed, if $H$ is a traceless Hermitian matrix, then $-iH$ is an element of $\fsu(n)$ and this map is clearly bijective. Taking the pull-back of the usual matrix Lie bracket gives us an equivalent Lie bracket on the space of Hermitian matrices:
\begin{equation}
    \label{eq:Lie_bracket}
    [H_1,H_2]:= \frac{1}{i}(H_1H_2 - H_2H_1).
\end{equation}
\textit{For this reason, whenever we write $\fsu(n)$, we mean the space of traceless Hermitian matrices of order $n$ endowed with the bracket above.}

We note that the system~\eqref{eq:unitary_control_system} can be rewritten as a differential inclusion
\begin{equation}
    \label{eq:diff_inclusion}
    i\dot U \in \cH U, \qquad U(0) = I,
\end{equation}
where 
$$
\cH =\{H_0 + u_1H_1 +\dots + u_mH_m: (u_1,\dots,u_m)\in \R^m \}\subset \fsu(n).
$$
\textit{Differential inclusions} are nothing but set-valued differential equations, meaning that an absolutely continuous curve $U:[0,T]\to SU(n)$ is a solution to~\eqref{eq:diff_inclusion} if~\eqref{eq:diff_inclusion} is satisfied at almost every time. \textit{The reachable set} of~\eqref{eq:diff_inclusion} is the set of all points $U_T \in SU(n)$ such that there exists a solution $U:[0,T]\to SU(n)$, for which $U(0)= I$ and $U(T)=U_T$. 

To engineer Hamiltonians using pulses, we use the technique of extensions of control systems. The idea is that we can replace $\cH$ with $\tilde \cH \supset \cH$, such that the closure of the reachable set does not change. To do so we first define the pulse group (it is an extension of the notion of control group in the dynamical decoupling literature~\cite{ViolaKnillLloyd1999}).
\begin{definition}
\label{def:pulse_group}
The pulse group $G$ is the smallest closed Lie subgroup of $SU(n)$ whose Lie algebra $\fg$ contains $\Lie\{H_1,\dots,H_m\} $, the Lie algebra generated by the Hamiltonians $H_1,\dots,H_m$.
\end{definition}
Every pulse in $G$ can be approximated arbitrarily well by a sequence of pulses of the form $e^{-i \theta H_i}$, $\theta \in \R$. Essentially $G$ is the theoretically largest set of pulses.

Next we combine pulses from $G$ with evolution under the drift Hamiltonian $H_0$. We will see in Section~\ref{sec:pulse_group} that it results in a total extension of the form:
$$
\tilde \cH = \conv(Ad_G H_0 + \fg), \qquad Ad_G H_0 := \{Ad_g H_0 = gH_0g^\dagger: g \in G\}.
$$
This gives a large set of time-independent Hamiltonians that can be engineered through pulses and constant controls. 

\begin{definition}
    We call the set $\conv(Ad_G H_0)$ the \textit{orbitope of effective Hamiltonians}.
\end{definition}
We will see that this set $\tilde \cH$ is basically determined by this orbitope. Recall that an orbitope is the convexification of an orbit of a representation of a compact group~\cite{orbitopes}. Our main task reduces to the description of this set, which can be achieved through a little bit of representation theory.

The map 
$$
g\in G \mapsto Ad_g \in End(\fsu(n))
$$
is a representation of the group $G$. Since the action of $G$ is orthogonal with respect to the scalar product
\begin{equation}
\label{eq:scalar_product}
    \langle H_1,H_2 \rangle = \Tr(H_1H_2), \qquad \forall H_1,H_2 \in \fsu(n), 
\end{equation}
we can decompose $\fsu(n)$ into irreducible representations of $G$, which we collect into the \textit{isotypic components}:
\begin{equation}
    \fsu(n) = V_0 \oplus \bigoplus_\lambda V_\lambda,
\end{equation}
where $V_0$ is the subspace of vectors fixed by the action of $Ad_G$, $\lambda$ belongs to the set of non-trivial irreducible representations of $G$ and $V_\lambda \simeq W_\lambda^{m_\lambda}$ are the isotypic components, which are subrepresentations of $Ad_G$. Here $W_\lambda$ is a model for the irreducible representation $\lambda$. 

In Section~\ref{sec:orbitopes}  we will see that the set $\tilde \cH$ can be partially described by the space generated by $\conv Ad_G H_0$, which can be described effectively via projections to the isotypic components. Two projections will play a very important role:
\begin{align*}
    &\pi_0:\fsu(n)\to V_0,\\
    &\pi_0^\perp:\fsu(n)\to V_0^\perp,
\end{align*}
namely the projection $\pi_0$ to the subspace of fixed elements and the projection $\pi_0^\perp$ to its orthogonal complement. In particular, we will see that $\conv Ad_G H_0$ lies inside a minimal affine subspace $\Aff(Ad_G H_0)\subset \fsu(n)$ passing through $\pi_0 H_0$. This means that every Hamiltonian inside a sufficiently small ball centered at $\pi_0 H_0$ and contained inside $\Aff(Ad_G H_0)$ can be engineered. This is of particular interest when $\pi_0 H_0 = 0$, since the differential inclusion given by a small ball centered around zero will be equivalent to a driftless control system with plenty of Hamiltonians to choose from.

Finally, in Section~\ref{sec:applications}, we give some applications. For example, we recover the classical dynamical decoupling condition.
\begin{theorem}
    \label{thm:decoupling}
    Consider a bipartite finite-dimensional quantum system consisting of a controlled system of interest and an uncontrollable bath, with total Hamiltonian of the form
\begin{equation}
    \label{eq:bi-hamiltonians}
    H(u) = H_S(u) \otimes I + I \otimes H_B + H_{int}
\end{equation}
    where $H_S(u) = H_0 + u_1 H_1 + \dots + u_m H_m$ is the Hamiltonian of a controlled system, $H_B$ is the Hamiltonian of the bath and $H_{int}$ is the interaction Hamiltonian, which is assumed to be unknown. 
    
    Then the controlled system can be decoupled from any bath for any possible $H_{int}$ by a single control strategy if and only if the subspace of fixed elements under the action of $G$ is trivial, i.e. $V_0 = \{0\}$. The universal decoupling strategy will result in effective Hamiltonians of the form
    $$
    H_{eff} = H_{pulse}\otimes I + I \otimes H_B,
    $$
    where $H_{pulse}\in \fg$.
    \end{theorem}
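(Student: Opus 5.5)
We work in the bipartite setting, so the pulse group acts as $G\otimes I$ on $\fsu(n_S n_B)$, and the extension of Section~\ref{sec:pulse_group} gives $\tilde\cH=\conv(\Ad_{G\otimes I}H(0))+\fg\otimes I$. The proof has two parts: sufficiency (if $V_0=\{0\}$, averaging yields the claimed universal form) and necessity (if $V_0\neq\{0\}$, some interaction survives every strategy).

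\emph{Decomposition.} Expand every Hermitian operator on $\C^{n_S}\otimes\C^{n_B}$ as
$$A\otimes I+I\otimes B+\sum_k S_k\otimes B_k,$$
with $A,S_k\in\fsu(n_S)$, $B$ and $B_k$ traceless Hermitian on the bath, and $\{B_k\}$ orthonormal. Conjugation by $g\otimes I$ acts as $\Ad_g$ on the $A$ and $S_k$ slots and fixes the bath slots. Thus, as a $G$-representation, the total space is isomorphic to
$$\fsu(n_S)\oplus(\text{trivial})\oplus\fsu(n_S)\otimes\R^{d}.$$
Its fixed subspace is
$$V_0\otimes I\;\oplus\;I\otimes\fsu(n_B)\;\oplus\;V_0\otimes\spann\{B_k\}.$$

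\emph{Sufficiency.} Let $\mu$ be the normalized Haar measure on the compact group $G$. The group average $\int_G\Ad_{g\otimes I}H(0)\,d\mu(g)$ lies in $\conv\Ad_{G\otimes I}H(0)$; this is the closure of convex combinations, which is fine since orbits are compact. The average equals the orthogonal projection onto the fixed subspace. If $V_0=\{0\}$, this projection sends $H_0\otimes I$ and $H_{int}$ to zero and leaves $I\otimes H_B$ unchanged. Adding an arbitrary element of $\fg\otimes I$ then gives exactly $H_{pulse}\otimes I+I\otimes H_B$.

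The strategy is universal because it depends only on $G$. In practice one approximates the Haar measure by a finite convex combination $\sum_j p_j\Ad_{g_j}$, chosen so that it annihilates all of $V_0^\perp=\fsu(n_S)$ exactly. This is possible because the set of operators $\{\sum p_j\Ad_{g_j}\}$ is convex and contains the projection $\pi_0=0$, by Carathéodory in the finite-dimensional space $End(\fsu(n_S))$. Applying the same convex combination to any $H_{int}$ kills every system factor, since the action is slotwise on each $S_k$. By Filippov relaxation, the closure of the reachable set equals that of the extended system, so the engineered Hamiltonian is realized in the sense of approximate reachability.

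\emph{Necessity.} Suppose $0\neq X\in V_0$, choose a bath of dimension at least $2$ with some traceless $B_1\neq0$, and set $H_{int}=X\otimes B_1$. For every element of $\tilde\cH$, the component along $X\otimes B_1$ is preserved. Indeed, $\langle \Ad_{g\otimes I}Y, X\otimes B_1\rangle=\langle Y,\Ad_{g^{-1}\otimes I}(X\otimes B_1)\rangle=\langle Y,X\otimes B_1\rangle$, so this pairing is constant on the orbit and hence on its convex hull. It is also orthogonal to $\fg\otimes I$, because $B_1$ is traceless. Every effective Hamiltonian therefore carries the coupling $\|X\|^2\|B_1\|^2\neq0$ and is not decoupled.

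It remains to rule out that a general reachable-set-level strategy could do better than any effective Hamiltonian. The invariant $\langle\,\cdot\,,X\otimes B_1\rangle$ is conserved in the interaction frame. I would make this precise by arguing that, up to closure, the reachable set of the original system coincides with that of $i\dot U\in\tilde\cH U$. Then I would show that any trajectory of the extended inclusion whose generator is always of decoupled form must have velocity with zero $X\otimes B_1$ component, which contradicts the conserved pairing.

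\emph{Main obstacle.} The hardest step is formalizing ``decoupled by a single control strategy'' at the level of trajectories rather than Hamiltonians. I would first try to define decoupling as existence of a constant effective Hamiltonian in $\tilde\cH$ of product form, which reduces the necessity argument to the invariant-pairing computation above.
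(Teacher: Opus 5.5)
Your proof of the equivalence is correct, and your sufficiency argument takes a somewhat different route from the paper's.

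\textbf{Necessity.} This is the paper's argument made explicit. The paper only notes that the fixed subspace of $\Ad_g\otimes I$ on $\fsu(\mathfrak H_S)\otimes\fsu(\mathfrak H_B)$ is $V_0\otimes\fsu(\mathfrak H_B)$. You exhibit the witness $X\otimes B_1$ together with the invariant functional $\langle\cdot,X\otimes B_1\rangle$, which is constant on the orbitope and vanishes on $\fg\otimes I$.

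\textbf{Sufficiency.} The paper appeals to Lemma~\ref{lem:0_inside}, which places zero in the \emph{interior} of $\conv(O_v)$ via Weil's formula, barycenters and support functions, and applies it vector by vector. You instead work at the level of operators. You observe that $\int_G\Ad_g\,d\mu=\pi_0=0$ lies in the compact set $\conv\{\Ad_g:g\in G\}\subset\mathrm{End}(\fsu(n_S))$, and you extract a finite convex combination by Carath\'eodory. This needs only membership, not interiority. It also yields one pulse sequence valid for every bath and every $H_{int}$ at once, which is what ``a single control strategy'' requires. The per-vector statement of Lemma~\ref{lem:0_inside} does not by itself give that uniformity.

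\textbf{What is missing.} You do not reproduce the content of the paper's final paragraph. The paper reads the last sentence of the theorem as a restriction: \emph{every} strategy that decouples all possible $H_{int}$ must annihilate the whole system drift, so the system part of its effective Hamiltonian lies in $\fg$. The remark after the theorem says exactly this. You only show that your particular Haar-type strategy produces $H_{pulse}\otimes I+I\otimes H_B$.

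The fix is one line from your own slotwise observation. Suppose $\sum_j p_j(\Ad_{g_j}\otimes I)H_{int}=0$ for all $H_{int}$. Taking $H_{int}=S\otimes B_1$ with $B_1\neq0$ gives $\bigl(\sum_j p_j\Ad_{g_j}S\bigr)\otimes B_1=0$ for every $S\in\fsu(n_S)$. Hence $\sum_j p_j\Ad_{g_j}$ vanishes on $\fsu(n_S)$, in particular on $H_0$, and only the $\fg$-component survives on the system side. You should add this step.

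\textbf{The trajectory-level plan.} You can drop it. The paper works throughout with constant effective Hamiltonians in $\tilde\cH$, which is exactly the definition you propose at the end. If decoupling were instead defined through reachable propagators, necessity would fail in general. For example, $e^{-i\pi Z\otimes Z}=-I$ is of product form even though $Z\otimes Z$ is a genuine coupling with $V_0\neq\{0\}$ for trivial $G$.
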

This theorem implies that any decoupling strategy that does not assume any a priori knowledge about the environment interaction will result in a trivial Hamiltonian acting on the system of interest (if it annihilates any possible $H_{int}$, it must annihilate the whole of $H_S(u)$ as well). In particular, the only achievable states of the controlled system are the elements of the control group $G$.

Another application is given by the control of a network of identical qubits with identical couplings. Assume that we have $N$ two-level systems which are coupled according to a graph $\Gamma$. This means that vertices represent qubits and edges the couplings between them. Assume that we have a controlled Hamiltonian of the form
\begin{equation}
    \label{eq:qubit_hamiltonian}
    H_0 + \sum_{j=1}^{2N}u_jH_{j}
\end{equation}
where
\begin{equation}
    \label{eq:qubit_drift}
    H_0 = \sum_{j=1}^N \omega Z_j + \sum_{j \sim k} \beta X_jX_k + \gamma Y_jY_k + \delta Z_jZ_k,
\end{equation}
\begin{equation}\label{eq:control_hamiltonians}
        H_j = \begin{cases}
        X_j, & \text{if } j = 1,\dots N;\\
        Y_{j-N} & \text{if } j = N+1,\dots 2N.
    \end{cases} 
\end{equation}
Here the summation over $j\sim k$ means summation over all adjacent pairs of vertices, the real constants $\beta, \gamma,\delta$ are not all zero, $X,Y,Z$ denote the Pauli matrices and the index indicates the site on which each matrix acts. We have used the usual shorthand notation $X_j = I_1 \otimes \dots \otimes X_j \otimes \dots \otimes I_N $ and similarly for $Y$ and $Z$. For example, capacitive coupling between superconducting qubits gives rise to transverse interactions~\cite{Krantz2019}. Effective models can also contain residual $ZZ$ interactions~\cite{MagesanGambetta2020}, which a re often seen as parasitic and are represented here by a nonzero coefficient $\delta$.

For this system, an interesting question motivated by quantum computation is: which single- and two-qubit Hamiltonians can be generated? Is it possible to generate Hamiltonians acting on more than two qubits? With our analysis we are able to give a qualitative description of all effective Hamiltonians.
\begin{theorem}
\label{thm:qubit_networks}
    Consider a system with Hamiltonian~\eqref{eq:qubit_hamiltonian}-\eqref{eq:control_hamiltonians}.
    \begin{enumerate}
        \item If $u_j=0$ for $j=N+1,\dots ,2N$ we have
\begin{align}
    \Aff (Ad_G H_0) &  = \beta\sum_{j \sim k} X_jX_k +\omega\left(\bigoplus_{j}\spann \{ Y_j,Z_j\}\right) \oplus\nonumber \\
    &\oplus \frac{(\delta - \gamma)}{2}\left(\bigoplus_{j\sim k}\spann \{Y_jY_k-Z_jZ_k, Y_jZ_k+Z_jY_k\}\right) \oplus \nonumber\\
    &\oplus  \frac{(\delta + \gamma)}{2}\left(\bigoplus_{j\sim k}\spann \{Y_jY_k+Z_jZ_k, Y_jZ_k-Z_jY_k\}\right).\label{eq:orbitope_many_qubits_X_pulses}
\end{align}
        \item Otherwise,
        $$
        \Aff (Ad_G H_0) = \omega\bigoplus_{j\in\{1,\dots,N\}}(\fsu(2)_j \otimes I_{\{1,\dots,N\}\setminus \{j\}}) \oplus \bigoplus_{j\sim k}(\fsu(2)_j\otimes \fsu(2)_k\otimes I_{\{1,\dots,N\}\setminus \{j,k\}}).
        $$
    \end{enumerate}

\end{theorem}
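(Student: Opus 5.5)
The plan is to reduce both statements to the structural description of $\Aff(Ad_G H_0)$ developed in Section~\ref{sec:orbitopes}: $\Aff(Ad_G H_0)=\pi_0H_0+\spann\{Ad_g H_0-\pi_0 H_0: g\in G\}$, and the linear part equals $\bigoplus_\lambda \spann(Ad_G\,\pi_\lambda H_0)$. I will use this as given, together with the elementary fact that $\spann(Ad_G v)$ is the smallest subrepresentation containing $v$. In particular, if an isotypic component $V_\lambda$ is irreducible (multiplicity one), then $\spann(Ad_G\pi_\lambda H_0)$ is all of $V_\lambda$ whenever $\pi_\lambda H_0\neq 0$, and it is $\{0\}$ otherwise. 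This is the content of the notation ``$c\cdot V$'' in the statement: the summand appears exactly when the coefficient $c$ is nonzero. So in each case there are three tasks. First, identify $G$. Second, decompose $\fsu(2^N)$, restricted to the relevant part, into isotypic components and check that the components meeting $H_0$ are multiplicity free. Third, compute the projections of $H_0$.

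\textbf{Case 1.} Only the $X_j$ are available, so $\fg=\spann\{X_1,\dots,X_N\}$. Then $G$ is the $N$-torus $\{\prod_j e^{-i\theta_j X_j}\}$, which is already closed. I will work in the Pauli basis. $Ad_{e^{-i\theta X}}$ fixes $I$ and $X$, and rotates $\spann\{Y,Z\}$ by angle $2\theta$. Consequently $V_0$ is spanned by Pauli strings containing only $I$ and $X$, and $\pi_0H_0=\beta\sum_{j\sim k}X_jX_k$.

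The single-site terms $\omega Z_j$ lie in $\spann\{Y_j,Z_j\}$, which carries the real character $\pm 2e_j$. The two-site block $\spann\{Y_j,Z_j\}\otimes\spann\{Y_k,Z_k\}$ splits into two planes on which the torus acts through the characters $2(e_j+e_k)$ and $2(e_j-e_k)$. A direct computation with the rotation matrices identifies these planes. The plane $\spann\{Y_jY_k-Z_jZ_k,\,Y_jZ_k+Z_jY_k\}$ is the $e_j+e_k$ one, since its elements rotate by $2(\theta_j+\theta_k)$. The plane $\spann\{Y_jY_k+Z_jZ_k,\,Y_jZ_k-Z_jY_k\}$ is the $e_j-e_k$ one, and it is invariant under simultaneous rotation.

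Writing $\gamma YY+\delta ZZ=\tfrac{\delta+\gamma}{2}(YY+ZZ)+\tfrac{\delta-\gamma}{2}(ZZ-YY)$ gives the projections. The weights $\pm e_j$, $\pm(e_j+e_k)$ and $\pm(e_j-e_k)$ over distinct sites and edges are pairwise distinct real characters. Each of the corresponding isotypic components that $H_0$ meets is therefore a single two-dimensional irreducible, and the orbit of a nonzero projection spans it, since it is a circle. Summing these pieces yields~\eqref{eq:orbitope_many_qubits_X_pulses}.

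\textbf{Case 2.} Here both $X_j$ and $Y_j$ are available, generating $\fsu(2)_j$, so $G=SU(2)^{\otimes N}$ acting by local unitaries. (If only some sites have $Y$-controls, the same argument runs site by site, mixing the two cases. I read ``otherwise'' as all $Y_j$ being available.) Under $SU(2)^N$ the space $\fsu(2^N)$ decomposes as $\bigoplus_{\emptyset\neq S\subseteq\{1,\dots,N\}}\bigotimes_{j\in S}\fsu(2)_j$, with identities elsewhere. Each summand is an outer tensor product of adjoint representations, hence irreducible. The summands are pairwise non-isomorphic because they are distinguished by which factors act nontrivially. It follows that $V_0=\{0\}$ and $\pi_0H_0=0$.

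The projection of $H_0$ onto $V_{\{j\}}$ is $\omega Z_j$, and onto $V_{\{j,k\}}$ it is $\beta X_jX_k+\gamma Y_jY_k+\delta Z_jZ_k$, which is nonzero since not all of $\beta,\gamma,\delta$ vanish. The orbits of these projections span the corresponding irreducibles, which gives the claimed direct sum.

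\textbf{Main obstacle.} The delicate point is the bookkeeping in Case 1. One must verify which of the two planes carries which character and that the signs match the formula. One must also make sure that no two distinct edges, or an edge and a site, produce isomorphic real irreducibles. Otherwise an isotypic component would have multiplicity greater than one, and the span of the orbit could be a proper subspace of it. I will handle this by listing the characters explicitly and noting that, for an undirected edge, $e_j-e_k$ and $e_k-e_j$ give the same real representation realized on the same plane.
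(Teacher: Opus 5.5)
Your conclusions are correct. Case 2 is essentially the paper's argument: the blocks $\fsu(2)_{j_1}\otimes\dots\otimes\fsu(2)_{j_k}$ are irreducible and pairwise inequivalent, and Lemma~\ref{lemm:decomposition} then applies.

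For Case 1 you take a different and more elementary route. The paper conjugates by $H^{\otimes N}$ so that the pulses become $Z_j$, and takes the diagonal Cartan subalgebra generated by all $Z$-strings. It then computes the root spaces $\spann\{X_{y,z},Y_{y,z}\}$ and the root values on the $Z_j$, groups $\fg$-equivalent roots into isotypic components, and transports the two-qubit planes back through the Hadamard map. You instead read off the torus weights directly on Pauli strings from $Ad_{e^{-i\theta X}}$. This avoids the basis change and settles the sign bookkeeping (which plane is $e_j+e_k$ and which is $e_j-e_k$) in one computation; I checked your assignment and it is right. The paper's route buys generality: it is an instance of the restricted-root description valid for any connected abelian pulse group (Section~\ref{sec:abelian_groups}).

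One intermediate claim is false and should be replaced. The isotypic components that $H_0$ meets are \emph{not} single two-dimensional irreducibles. For $l\neq j$, the plane $\spann\{Y_jX_l,Z_jX_l\}$ carries the same character $2e_j$ as $\spann\{Y_j,Z_j\}$. More generally, every pattern of $I$'s and $X$'s on the spectator sites gives another copy. So the component of weight $\pm 2e_j$ has multiplicity $2^{N-1}$, and those of weights $\pm 2(e_j\pm e_k)$ have multiplicity $2^{N-2}$. The paper meets the same fact: in the Hadamard picture $X_j$ is a sum of many $X_{x,y}$, all in one class $W_\gamma$. Consequently the orbit of $\pi_\lambda H_0$ does \emph{not} span the isotypic component, and your ``multiplicity one'' principle is not the one that applies here.

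The correct argument is already implicit in your phrase ``since it is a circle''. The projection $\pi_\lambda H_0$ (for example $\omega Z_j$) lies in one specific invariant plane on which the torus acts by a nontrivial rotation. Its orbit is therefore a circle, and $\spann(Ad_G\pi_\lambda H_0)$ is exactly that plane, which is all the theorem asserts.

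Multiplicity within a single character is thus harmless, contrary to the worry in your ``main obstacle'' paragraph. The check that actually matters is the one you do correctly: distinct sites and edges give distinct weights $\pm e_j$ and $\pm(e_j\pm e_k)$. This guarantees that contributions from different sites or edges are not forced into a common diagonal plane.
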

 What does this mean in practical terms? If we only allow for rotations around $X$ as pulses, then no $XX$ coupling can be killed and as a result no single- or two-qubit Hamiltonians are in the space of effective Hamiltonians (unless there are only one or two qubits altogether). If $\beta = 0$, then we can generate a variety of single-qubit and nearest-neighbor two-qubit Hamiltonians, but not all of them. If instead we also allow for $Y$-pulses, then all single-qubit Hamiltonians can be generated when $\omega \neq 0$ as well as two-qubit Hamiltonians for neighboring qubits. However, also in this case no long-distance qubit pair Hamiltonians can be generated. Note that this theorem gives a description of the space spanned by $\conv Ad_G H_0$ and not the convexification itself. Therefore, this description is enough to know which effective Hamiltonians can be achieved modulo a multiplicative constant. Note that we also have the Lie algebra of the pulse group $G$, which can be used to generate exponentials of elements in $\fg$.

    \begin{remark}
        It should be emphasized that even though we can only generate nearest-neighbor qubit Hamiltonians using a combination of pulses and bounded controls, it does not mean that long-distance qubit gates cannot be achieved. The Hamiltonian~\eqref{eq:qubit_hamiltonian} under generic assumptions is controllable, which implies that there exists a control (even without pulses) that would allow us to realize a gate acting on an arbitrary pair of qubits. It will simply not be possible to realize such a gate as a single exponential of one of the effective Hamiltonians.
    \end{remark}

Theorem~\ref{thm:qubit_networks} gives a qualitative characterization of possible effective Hamiltonians. This means that, for example, in the first case we can generate all Hamiltonians of the form
\begin{align}
    &\beta\sum_{j \sim k} X_jX_k +\omega\sum_{j}(a_{j} Z_j +b_j Y_j) +\nonumber \\
    + &\frac{(\delta - \gamma)}{2}\sum_{j\sim k}\left(a_{jk}(Z_jZ_k-Y_jY_k)+ b_{jk}(Y_jZ_k+Z_jY_k)\right) + \nonumber \\
    + &\frac{(\delta + \gamma)}{2}\sum_{j\sim k}\left(\tilde a_{jk}(Z_jZ_k+Y_jY_k)+ \tilde b_{jk}(Y_jZ_k-Z_jY_k)\right).\label{eq:eff_ham_qubit_network_X}
\end{align}
with $a_j,b_j,a_{jk},b_{jk},\tilde a_{jk},\tilde b_{jk}$ being real parameters. However, Theorem~\ref{thm:qubit_networks} does not say anything about the \textit{range} of those coefficients. Similarly, the second part of Theorem~\ref{thm:qubit_networks} says that we can engineer all of the Hamiltonians of the form
\begin{align}
    \sum_{j=1}^N&(a_{1,j} X_j +a_{2,j} Y_j+a_{3,j}Z_j) 
    + \sum_{j\sim k}\left(\sum_{p,q=1}^3a_{pq,jk}P_{p,j}P_{q,k}\right), 
    \label{eq:eff_ham_qubit_network_XY}
\end{align}
    where $(P_{1,j},P_{2,j},P_{3,j})=(X_j,Y_j,Z_j)$ and all the coefficients are assumed to be real. Here arbitrary single-qubit Hamiltonians are supplied by $\fg$, independently of $\omega$. Thus this family is considered in $\tilde\cH=\conv(\Ad_G H_0)+\fg$.

To obtain a quantitative estimate we need to find a set $K\subset \conv (Ad_G H_0)$ that can be described by a finite number of inequalities. This would allow us to quickly check if a given combination of coefficients gives rise to a realizable Hamiltonian, even though this will give only a sufficient condition for realizability. Complementary to Theorem~\ref{thm:qubit_networks} we have the following result.
\begin{theorem}
\label{thm:quantitative_qubit_networks}
    Under the assumptions of Theorem~\ref{thm:qubit_networks},
    \begin{enumerate}
        \item For the first case, if the following inequality is satisfied:
        $$
        \sum_{j=1}^N \sqrt{a_j^2+b_j^2} + \sum_{j\sim k}\left( \sqrt{a_{jk}^2+b_{jk}^2} + \sqrt{\tilde a_{jk}^2+\tilde b_{jk}^2}\right) \leq 1
        $$
        then the Hamiltonians of the form~\eqref{eq:eff_ham_qubit_network_X} can be engineered.
        \item For the second case, if the coefficients $a_{pq,jk}$ for $p,q=1,2,3$ and $j\sim k$ satisfy the following inequality:
        $$
        \frac{1}{R}\sum_{j\sim k} \sqrt{\sum_{p,q=1}^3 a_{pq,jk}^2 }\leq 1,
        $$
        where
        $$
        R = \min\left\{\max\{|\beta|,|\gamma|,|\delta|\},\frac{1}{\sqrt{3}}\max\left\{|\beta|+|\gamma| - |\delta|,|\gamma|+|\delta| - |\beta|,|\delta|+|\beta| - |\gamma| \right\}\right\},
        $$
        then the Hamiltonians of the form~\eqref{eq:eff_ham_qubit_network_XY} can be engineered (no restrictions on $a_{1,j}, a_{2,j}, a_{3,j}$). 
    \end{enumerate}
\end{theorem}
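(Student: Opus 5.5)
The plan rests on one convexity principle: the orbitope $\conv(Ad_G H_0)$ is convex and $Ad_G$-invariant. Suppose $K_1,\dots,K_r$ are subsets of the orbitope centred at the fixed part $\pi_0 H_0$. Then any combination $\pi_0H_0 + \sum_i t_i (h_i - \pi_0H_0)$ with $h_i\in K_i$, $t_i\ge 0$ and $\sum_i t_i\le 1$ again lies in the orbitope. The reason is that it is a convex combination of the $h_i$ and of $\pi_0 H_0$, and $\pi_0H_0$ lies in the orbitope because it is the group average. So it suffices to find, for each single-qubit site $j$ and each edge $j\sim k$, a disc or ball of the right radius inside the orbitope, centred at $\pi_0H_0$ and lying in the corresponding summand of the affine hull from Theorem~\ref{thm:qubit_networks}. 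The weighted-$\ell^1$ conditions in Theorem~\ref{thm:quantitative_qubit_networks} then follow by taking $t_i$ equal to the norm of the $i$-th block of coefficients divided by the block radius.

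\textbf{First case ($X$-pulses).} Here $G$ is the torus generated by the $e^{-i\theta_j X_j}$, with independent angles $\theta_j$ on each qubit. I would write $H_0$ in terms of the rotated pairs. Under $X_j$-rotation, $Z_j,Y_j$ rotate by angle $2\theta_j$. On an edge, the combinations $Y_jY_k-Z_jZ_k,\ Y_jZ_k+Z_jY_k$ rotate by $2(\theta_j+\theta_k)$, and $Y_jY_k+Z_jZ_k,\ Y_jZ_k-Z_jY_k$ rotate by $2(\theta_j-\theta_k)$. The orbit is therefore a product of circles of radii $\omega$ and $(\delta\mp\gamma)/2$ around the fixed part $\beta\sum X_jX_k$, parametrized by these characters.

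To isolate one circle, I average over the subtorus on which that character is constant and the others are not. For example, fixing $\theta_j+\theta_k$ and averaging over $\theta_j-\theta_k$ and over all other angles kills every other component, provided the characters are pairwise distinct (non-proportional) on the torus. This distinctness needs a check when qubits are shared between edges. The result is that the full circle $\pi_0H_0+$(circle of that component) lies in the orbitope, hence so does the closed disc. Applying the convexity principle to these discs gives exactly the stated inequality with unit radii, since the coefficients were normalized by $\omega$ and $(\delta\pm\gamma)/2$.

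\textbf{Second case ($X$ and $Y$ pulses).} Now $G=SU(2)^{N}$ acting locally, $\pi_0 H_0 = 0$, and the single-qubit terms come for free from $\fg$. For an edge $j\sim k$, averaging over all the other local $SU(2)$ factors kills every other edge term, and also the $Z$ terms. This averaging works because an element $\sigma_j\sigma_k$ with $j$ or $k$ averaged alone vanishes, and any other edge contains a qubit outside $\{j,k\}$ or a qubit averaged independently. Only the term $\beta X_jX_k+\gamma Y_jY_k+\delta Z_jZ_k$ survives, rotated by $SO(3)\times SO(3)$. Identifying $\fsu(2)_j\otimes\fsu(2)_k$ with $3\times3$ real matrices, the orbit is $\{O_1 D O_2^T\}$ with $D=\mathrm{diag}(\beta,\gamma,\delta)$, and its convex hull is a known orbitope described by signed singular values: $\conv$ of signed permutations of $D$ with determinant-sign constraints.

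The main obstacle is showing that this orbitope contains the Frobenius ball of radius $R$. I would compute the inradius of the convex hull of the vectors $(\epsilon_1 d_{\sigma 1},\epsilon_2 d_{\sigma 2},\epsilon_3 d_{\sigma 3})$ with $\prod\epsilon_i$ fixed by $\sign\det D$, which is the relevant singular-value polytope. Two candidate facets appear. One gives the bound $\max|d_i|$ in an operator-norm direction. The other comes from the parity constraint, with facet normal $(1,1,1)/\sqrt3$, giving $\frac1{\sqrt3}(|d_1|+|d_2|-|d_3|)$ maximized over arrangements. The minimum of the two yields $R$.

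The final step is to pass from "ball in singular-value space" to "Frobenius ball in matrix space". Every matrix of Frobenius norm at most $R$ has singular-value vector of Euclidean norm at most $R$, so that vector lies in the polytope, and $SO(3)\times SO(3)$-invariance then places the matrix in the orbitope. Combining the edges via the convexity principle gives the second inequality.
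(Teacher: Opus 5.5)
Your proposal is correct and follows the paper's proof. You average over subgroups and use Lemma~\ref{lem:convexification_matryoshka} to place the full circles (first case) and the radius-$R$ ball in the $SO(3)\times SO(3)$ singular-value orbitope (second case) inside $\conv(\Ad_G H_0)$, then combine blocks exactly as in Lemma~\ref{lem:convexification}; averaging over the kernel of each character is only a tidier choice of subgroup than the paper's subtorus-plus-discrete-subgroup averaging. One small slip: averaging over the factors $SU(2)_l$, $l\notin\{j,k\}$, does not remove $\omega Z_j$ and $\omega Z_k$, so you should first replace $H_0$ by $\pi_{\fg^\perp}H_0$ as the paper does, which is harmless because $\conv(\Ad_G H_0)+\fg=\conv(\Ad_G\pi_{\fg^\perp}H_0)+\fg$.
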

    
    We give a proof of this theorem in the appendix together with proofs of some lemmas related to the structure of orbitopes stated in Section~\ref{sec:orbitopes}.

\section{Pulse group and the orbitope of effective Hamiltonians}

\label{sec:pulse_group}

Consider the controlled bilinear Schrödinger equation~\eqref{eq:unitary_control_system}. As discussed previously, we assume that $u_j\in \R$ can take unbounded values. Then, as we take constant controls $u/\varepsilon$ and the final time $T=\varepsilon$ in the limit $\varepsilon\to 0$ the final point of the solution of~\eqref{eq:unitary_control_system} approaches
a solution of 
$$
    i\dot U = \left(\sum_{j=1}^m u_j H_j\right) U
$$
at time $1$. Note that since $T\to 0$, we can essentially realize instantaneous jumps (pulses) given by multiplications by unitary operators of the form
\begin{equation}
    \label{eq:exponentials}
    e^{-it(\sum_{j=1}^m u_j H_j)}.
\end{equation}
We can now compose individual pulses. As a result, we can achieve instantaneous jumps given by multiplication by elements of the group generated by the exponentials~\eqref{eq:exponentials}. As can be seen, for example, by the Baker-Campbell-Hausdorff formula, this group can be characterized as the unique subgroup of $SU(n)$ that integrates the Lie algebra generated by $\{H_1,\dots,H_m\}$. Note that this Lie subgroup is not necessarily closed, so we need to take the closure, which is exactly the pulse group $G$ defined in Definition~\ref{def:pulse_group}.

\begin{remark}
We do not discuss here concrete realizations of elements of pulse groups. For an example of constructions of Lie-bracket extensions using fast-oscillating controls, see~\cite[Sections~6.3--6.4]{AgrachevBaryshnikovSarychev2016}.
\end{remark}

\begin{remark}
\label{rmk:non-equailty}
    The Lie algebra of $G$ contains $\Lie\{H_1,\dots,H_m\}$, but is not necessarily equal to it. An example is a single Hamiltonian $H_1$, which generates a dense winding inside the maximal torus $T$ of $SU(n)$. It is clear that the Lie algebra $\mathfrak t$ of $T$ has dimension $n-1\geq 1$.
\end{remark}

Now that we have covered all possible pulses, let us mix pulses with ordinary controls. More precisely, let $g\in G$ be a pulse. Then we can realize
$$
ge^{-it H_0}g^\dagger = e^{-it Ad_g H_0},
$$
where $Ad_g H_0 = gH_0 g^\dagger$. Essentially, by applying a pulse-drift-pulse strategy, we were able to generate the propagator along $Ad_g H_0$. Combining all possible pulses, we see that we can generate propagators along Hamiltonians $Ad_G H_0$.
Gathering everything together, we can generate Hamiltonians of the form
$$
Ad_G H_0 + \fg.
$$

We can further extend the set of admissible velocities using \textit{Filippov's relaxation}, which is a generalization of the Trotter product formula. Before stating it, we state some definitions from the introduction in a mathematically precise way.

\begin{definition}
    Let $\cH \subset \mathfrak{su}(n)$ be a subset of Hermitian matrices. An absolutely continuous curve $U:[0,T]\to SU(n)$ satisfying 
\begin{equation}
\label{eq:diff_incl}
        i\dot U(t) \in \cH U(t)
\end{equation}
    for almost all $t\in[0,T]$ is called admissible. The reachable set from a point $U_0$ of the differential inclusion~\eqref{eq:diff_incl} at a time $t$ is the set
    $$
    \cR_{t}^\cH(U_0) = \{U(t): U:[0,t] \to SU(n), \text{ admissible},U(0)=U_0\}
    $$
    We also define the reachable set of~\eqref{eq:diff_incl} up to time $T$ as
    $$
    \cR_{\leq T}^\cH(U_0) = \bigcup_{t\in [0,T]} \cR_{t}^\cH(U_0).
    $$
\end{definition}

Since all of the differential inclusions in this paper are right-invariant, we can translate the reachable sets using right multiplication:
$$
\cR_{\leq T}^\cH(U_0 U_1) = \cR_{\leq T}^\cH(U_0)U_1, \qquad \forall U_0,U_1\in SU(n).
$$
In particular, it is enough to study reachable sets from the identity, for which we use the shortened notation
$$
\cR_{\leq T}^\cH := \cR_{\leq T}^\cH(I).
$$

We can now state a version of Filippov's theorem that applies to our case.
\begin{theorem}
\label{thm:Filippov}
Let $\cH\subset \mathfrak{su}(n)$ be a subset of Hermitian matrices of order $n$. Then 
$$
\overline{\cR^{\cH}_{\leq T} } = \overline{\cR^{\conv(\cH)}_{\leq T} }.
$$
\end{theorem}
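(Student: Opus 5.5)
One inclusion is immediate. Every admissible curve for $\cH$ is admissible for $\conv(\cH)$, so $\cR^{\cH}_{\leq T}\subset\cR^{\conv(\cH)}_{\leq T}$, and the same holds for the closures. The work is in the reverse inclusion. It suffices to show that for every $t\in[0,T]$ and every admissible curve $U$ of $i\dot U\in\conv(\cH)U$ on $[0,t]$, the endpoint $U(t)$ is a limit of endpoints at time $t$ of curves admissible for $\cH$.

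The first step is to pass from the inclusion to a control system. Here I will use that $SU(n)$ is compact and that the right-hand side is linear in the Hamiltonian. Given such a curve $U$, I set $H(t):=i\dot U(t)U(t)^\dagger\in\conv(\cH)$ for a.e.\ $t$. Since $\|U\|$ is bounded and $\dot U$ is integrable, $H\in L^1([0,t];\fsu(n))$. By Carathéodory's theorem each $H(s)$ is a convex combination of at most $N+1$ points of $\cH$, where $N=\dim\fsu(n)$. I then want to choose these points and the weights measurably in $s$. To do this cleanly, I will first approximate $H$ in $L^1$ by a piecewise constant function $\hat H$ with values in $\conv(\cH)$. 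This is possible because simple functions are $L^1$-dense, and I can project each value back into $\conv(\cH)$ by replacing it with a nearby value of $H$ on a positive-measure set. The solution $\hat U$ of $i\dot{\hat U}=\hat H\hat U$ satisfies $\|U(t)-\hat U(t)\|\le C\|H-\hat H\|_{L^1}$. This follows from Grönwall applied to the linear equation: writing $\hat U^\dagger U$ gives an exact Duhamel estimate with constant $1$, since the propagators are unitary.

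The second step, and the main one, is chattering on each constant piece. On an interval of length $\tau$ with constant value $\sum_k\lambda_k H_k$, where $H_k\in\cH$, I replace the dynamics by $\ell$ repetitions of switching between $H_1,\dots,H_{N+1}$ for durations $\lambda_k\tau/\ell$. This is admissible for $\cH$. By the Lie--Trotter product formula,
\[
\bigl(e^{-i\lambda_1\tau H_1/\ell}\cdots e^{-i\lambda_{N+1}\tau H_{N+1}/\ell}\bigr)^\ell\to e^{-i\tau\sum_k\lambda_kH_k}
\]
as $\ell\to\infty$. The error is $O(\tau^2\max\|H_k\|^2/\ell)$ because there are finitely many fixed matrices.

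The final step is to combine the pieces. Using bi-invariance of the operator norm under unitaries, the errors accumulate additively, since $\|A_1\cdots A_r-B_1\cdots B_r\|\le\sum\|A_j-B_j\|$ for unitaries. So the total error is at most $\varepsilon$ from the first step plus the finitely many Trotter errors, and this can be made arbitrarily small. This gives $\cR^{\conv(\cH)}_{t}\subset\overline{\cR^{\cH}_{t}}$ for each $t$, hence the claim after taking unions over $t\le T$ and closures.

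I expect the main obstacle to be the first step. $\cH$ may be unbounded (an affine subspace), so $H$ is only $L^1$. Hence I must make sure that the approximating piecewise-constant $\hat H$ takes values in $\conv(\cH)$ and not just in its closure, while staying $L^1$-close. I would handle this via Lusin's theorem: on a compact set of almost full measure, $H$ is continuous and bounded. On the small exceptional set, the $L^1$ contribution of $H$ is small, and there I can keep values of $H$ itself, or any fixed element of $\conv(\cH)$, at small cost because of the unitary Duhamel bound.
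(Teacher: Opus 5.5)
Your argument is correct, but it does not follow the paper: the paper gives no proof of Theorem~\ref{thm:Filippov}. It defers to the general relaxation theorem (Aubin--Cellina; Agrachev--Sachkov, Theorem~8.7). The standard constructive proof there treats general nonlinear inclusions. It builds chattering controls whose right-hand sides converge weakly to the relaxed one, then invokes continuity of trajectories under weak convergence. It is usually stated under boundedness or compactness assumptions on the velocity sets, with the relaxed controls already measurably parametrized.

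You instead use the linear, right-invariant, unitary structure. You approximate in $L^1$ by step functions, using the exact Duhamel bound $\|U(t)-\hat U(t)\|\le\|H-\hat H\|_{L^1}$. You apply Carath\'eodory only to finitely many constant values, so no measurable selection is needed. Then you apply Lie--Trotter on each piece and add the errors using unitary invariance of the operator norm. This gives a self-contained proof for arbitrary, possibly unbounded and non-closed $\cH$. That is exactly the setting the paper needs, since both $\cH$ and $\tilde\cH=\conv(\Ad_G H_0)+\fg$ are unbounded and the usual hypotheses of the cited theorem do not directly cover them. The cited theorem, in turn, applies to nonlinear systems, which your argument does not attempt.

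One simplification: the Lusin detour in your last paragraph is unnecessary. Its first option would also not work: keeping the values of $H$ on the exceptional set gives stretches that are neither piecewise constant nor $\cH$-admissible. The replacement you describe in the first step already handles unbounded $L^1$ data. Use step functions constant on intervals, not merely simple functions, since the chattering needs intervals. Suppose $\|H-\sum_k c_k\mathbf{1}_{I_k}\|_{L^1}<\varepsilon$. Choose $s_k\in I_k$ with $H(s_k)\in\conv(\cH)$ and $\|H(s_k)-c_k\|\le |I_k|^{-1}\int_{I_k}\|H-c_k\|$; such points form a set of positive measure. The resulting step function with values $H(s_k)$ is within $2\varepsilon$ of $H$ in $L^1$. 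Alternatively, take interval averages of $H$. These lie in $\conv(\cH)$ because, for a convex set in finite dimensions, the average of an integrable function taking values in it stays in the set.
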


Filippov's theorem says that convexifying the set of admissible velocities does not change the closure of the set of points that we could reach with the original system. 

\begin{remark}
    Usually Filippov's theorem is stated for general differential inclusions. Classical relaxation theorems and their hypotheses are discussed in~\cite{AubinCellina1984}. A constructive proof can be found in~\cite[Theorem 8.7]{Agrachev2004}.
\end{remark}

\begin{remark}
Theorem~\ref{thm:Filippov} is a generalization of the familiar Trotter product formula:
\begin{equation}
\label{eq:trotter}
    \lim_{N\to \infty} (e^{-it H_1/N}e^{-it H_2/N})^N = e^{-i2t \frac{(H_1 + H_2)}{2}}.
\end{equation}
In this case $\cH = \{H_1,H_2\}$ and by applying a control strategy of switching between $H_1$ and $H_2$ on constant intervals of time, we 
approximate the propagator along a Hamiltonian inside $\conv \cH$. Usually the right side of~\eqref{eq:trotter} is written in the form $e^{-it (H_1 + H_2)}$, but in this case there is a small time inconsistency. Inside $e^{-it (H_1 + H_2)}$ it seems that only time $t$ has passed, but on the left-hand side of~\eqref{eq:trotter} we switch between Hamiltonians $H_1$ and $H_2$ for the total time of $2t$. We can generalize this to an arbitrary convex combination of the two Hamiltonians in a straightforward manner:
$$
    \lim_{N\to \infty} (e^{-it \lambda H_1/N}e^{-it (1-\lambda )H_2/N})^N = e^{-it (\lambda H_1 + (1-\lambda)H_2)}.
$$

\end{remark}

\medskip

Let us now apply Theorem~\ref{thm:Filippov} to our case. We collect all the possible Hamiltonians that can be achieved through pulses and convexify this set. This includes the directions that can be achieved by pulses in $G$ and also by the combination of pulses and drifting. This results in
\begin{equation}
    \label{eq:relaxed_system}
    \tilde \cH = \conv(Ad_G H_0 + \fg) = \conv(Ad_G H_0) + \fg.
\end{equation}
Since the adjoint action preserves the bi-invariant scalar product on $\fsu(n)$ given by~\eqref{eq:scalar_product},
and $\fg$ is obviously a subrepresentation of $G$, we can replace $H_0$ in~\eqref{eq:relaxed_system} with the projection of $H_0$ to the orthogonal complement of $\fg$. To simplify the notation below, we make the following assumption, which does not reduce generality.

If we assume $H_0$ is orthogonal to $\fg\subset \fsu(n)$, then we would get
$$
\tilde \cH = \conv(Ad_G H_0) \oplus \fg.
$$
Therefore, we only need to understand the geometry of the convexification of the orbit $Ad_G(\pi_{\fg^\perp}H_0)$, where $\pi_\fg^\perp: \fsu(n)\to \fg^\perp$ is the projection to the orthogonal complement of $\fg$.  

\section{Orbitopes}
\label{sec:orbitopes}

The set $\conv(Ad_G H_0)$ is an example of an orbitope. Here we collect facts about them, which will be useful in the sequel.
\begin{definition}
    Let $G$ be a compact Lie group and $(V,\rho)$ be an orthogonal representation. An orbitope associated to $v\in V$ is the convexification of the corresponding orbit
    $$
    \conv(O_v) := \conv\{\rho(g)v:g\in G\}. 
    $$
\end{definition}

Various properties of orbitopes can be found in~\cite{orbitopes,Kobert2021}. To understand the geometry of the orbitope $\conv(Ad_G H_0)$ we need a little bit of representation theory. Since $G$ is a compact group, all of its irreducible representations are finite-dimensional. We denote the elements of the space of non-trivial irreducible representations using $\lambda$. The Lie algebra $\fsu(n)$ as a representation of $G$ can be decomposed into a sum of irreducible subrepresentations. This decomposition in general is not unique, because there might appear several subrepresentations that are isomorphic to a given irreducible representation $W_\lambda$. However, if we collect together all spaces isomorphic to a given one, we obtain the isotypic component $V_\lambda \simeq W_\lambda^{m_\lambda}$ and the decomposition into isotypic components
$$
\fsu(n) = V_0 \oplus (\bigoplus_\lambda V_\lambda)
$$
is unique. In the formula above $V_0$ is the subspace of fixed elements under the representation $Ad_G$. We can then decompose $H_0$ corresponding to projections onto isotypic components. Let $\pi_0:\fsu(n)\to V_0$ and $\pi_0^\perp : \fsu(n)\to V_0^\perp$ be orthogonal projections.

It is clear that
$$
\conv (Ad_G H_0) = \conv(Ad_G (\pi_0 H_0) + Ad_G (\pi^\perp_0 H_0)) = \pi_0 H_0 + \conv(Ad_G (\pi^\perp_0 H_0))
$$
Therefore, the question reduces to investigating the orbitope $\conv(Ad_G (\pi^\perp_0 H_0))$. In order to shorten the formulas we assume in this section without any loss of generality that $V_0 = \{0\}$. In this case, $\pi^\perp_0 H_0 = H_0$.

We have the following useful lemma.
\begin{lemma}
\label{lem:0_inside}
Let $G$ be a compact group and $(V,\rho)$ a representation of $G$ on $V$ that has no nonzero fixed elements. Let $v\neq 0$ be an element of $V$. Then $0$ belongs to the interior of the orbitope $\conv(O_v)\subset V$.
\end{lemma}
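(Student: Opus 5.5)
The plan is to prove that $0$ lies in the \emph{relative} interior of $\conv(O_v)$, i.e.\ its interior inside the linear span $L:=\spann(O_v)$. This is the reading under which the statement is true, and the one used afterwards via $\Aff(Ad_G H_0)$. Interior in all of $V$ cannot hold in general: if $V=W\oplus W$ with $W$ irreducible and $v$ lies in the first copy, then the orbitope is contained in that copy. I will use two tools: averaging with respect to the normalized Haar measure $dg$ on the compact group $G$, and a supporting-hyperplane argument.

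\textbf{Step 1 ($0$ lies in the orbitope).} Set $\bar v=\int_G \rho(g)v\,dg$. By invariance of Haar measure, $\rho(h)\bar v=\bar v$ for every $h\in G$. Since $V$ has no nonzero fixed vectors, $\bar v=0$. The orbit $O_v$ is compact, as the continuous image of $G$, so by Carath\'eodory's theorem $\conv(O_v)$ is compact. A Haar integral of a continuous function with values in the compact convex set $\conv(O_v)$ lies in that set, by Hahn--Banach separation or Riemann-sum approximation. Hence $0\in\conv(O_v)$. As a consequence, the affine hull of the orbitope equals the linear subspace $L$, which is $G$-invariant.

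\textbf{Step 2 ($0$ is in the relative interior).} Suppose instead that $0$ lies on the relative boundary of the compact convex set $C=\conv(O_v)$, which has nonempty interior in $L$. The supporting hyperplane theorem in $L$ then gives a linear functional $\ell$ on $L$ with $\ell\neq 0$ and $\ell\ge 0$ on $C$. Consider the function $g\mapsto \ell(\rho(g)v)$. It is continuous and nonnegative, and its Haar integral is $\ell(\bar v)=\ell(0)=0$. Haar measure on a compact group gives positive mass to every nonempty open set, so this function vanishes identically. Therefore $\ell$ vanishes on $O_v$, hence on $L=\spann(O_v)$, which contradicts $\ell\neq0$. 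So $0$ is in the relative interior.

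The only delicate point is Step 2. It needs the full-support property of Haar measure, and it needs the convention that ``interior'' means interior relative to $L=\Aff(O_v)$. The hypothesis $v\neq0$ only ensures that $L\neq\{0\}$, so that the statement is non-vacuous.
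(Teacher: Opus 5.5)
Your proof is correct and follows the paper's route. Haar averaging gives $\int_G\rho(g)v\,dg=0$, hence $0\in\conv(O_v)$; then a supporting-functional argument on $C_v=\spann(O_v)$ (a nonzero functional with zero orbit average that does not vanish identically on the orbit must take a negative value) shows $0$ is interior relative to the span, which is also what the paper's proof actually establishes, since it too works inside $C_v$. The differences are minor. The paper passes through Weil's integration formula and Choquet's barycenter proposition where your direct separation argument suffices, and you make explicit two points the paper leaves implicit: the full-support property of Haar measure, and that ``interior'' must mean interior relative to the span.
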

For completeness we give the proof in the appendix. As a result, under the assumptions of Lemma~\ref{lem:0_inside}, the orbitope $\conv(Ad_G H_0)$ will contain Hamiltonians inside a small ball contained in $\spann (Ad_G H_0)$ and centered at $\pi_0 H_0$. This subspace of $\fsu(n)$ can be further described, which leads to the following lemma.
\begin{lemma}[{\cite[Section~2]{orbitopes}}]
\label{lemm:decomposition}
    Let $G$ be a compact group and $(V,\rho)$ a representation of $G$ on $V$ that has no nonzero fixed elements and consider its decomposition into isotypic components:
    $$
    V= \bigoplus_{\lambda}V_\lambda.
    $$
    Denote by $\pi_\lambda: V \to V_\lambda$ the orthogonal projections. Then
    $$
    \spann( O_v) = \bigoplus_\lambda \spann(O_{\pi_\lambda v}).
    $$
\end{lemma}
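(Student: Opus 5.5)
We have to prove $\spann(O_v)=\bigoplus_\lambda \spann(O_{\pi_\lambda v})$. The inclusion "$\subseteq$" is immediate. The isotypic projections $\pi_\lambda$ commute with $\rho(g)$, since each $V_\lambda$ is $G$-invariant and the decomposition is orthogonal. Hence $\rho(g)v=\sum_\lambda \rho(g)\pi_\lambda v$, and each summand lies in $\spann(O_{\pi_\lambda v})\subset V_\lambda$. These subspaces sit in distinct isotypic components, so their sum is direct, and $\spann(O_v)$ is contained in the direct sum.

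For "$\supseteq$" I will show that each projection $\pi_\lambda$ maps $\spann(O_v)$ into itself. Once this holds, applying $\pi_\lambda$ to $\rho(g)v$ gives $\rho(g)\pi_\lambda v\in \spann(O_v)$ for all $g$, so $\spann(O_{\pi_\lambda v})\subset\spann(O_v)$ for every $\lambda$, and summing over $\lambda$ gives the claim.

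The key step is that $\pi_\lambda$ preserves $\spann(O_v)$. Set $W:=\spann(O_v)$. It is a $G$-invariant subspace of $V$, because $\rho(h)\rho(g)v=\rho(hg)v$. Since $G$ is compact, $W$ is itself a finite-dimensional orthogonal representation and decomposes into isotypic components $W=\bigoplus_\lambda W_\lambda$. Each irreducible subrepresentation of $W$ of type $\lambda$ is an irreducible subrepresentation of $V$ of type $\lambda$. It therefore lies in $V_\lambda$, because $V_\lambda$ is the sum of all subrepresentations isomorphic to $W_\lambda$. This uses Schur's lemma: the orthogonal projection of an irreducible of type $\lambda$ onto $V_\mu$ with $\mu\neq\lambda$ is a $G$-map between non-isomorphic irreducible-type representations, hence zero. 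So $W_\lambda\subset V_\lambda\cap W$, and since $W=\bigoplus_\lambda W_\lambda$ with $V=\bigoplus_\lambda V_\lambda$ direct, we get $W_\lambda=V_\lambda\cap W$ and $\pi_\lambda(W)=W_\lambda\subset W$.

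As an alternative that avoids the decomposition of $W$, I would use the explicit formula
$$
\pi_\lambda=\dim(W_\lambda)\int_G \overline{\chi_\lambda(g)}\,\rho(g)\,dg,
$$
where $dg$ is Haar measure and $\chi_\lambda$ is the character. This shows directly that $\pi_\lambda w$ is a limit of linear combinations of the vectors $\rho(g)w\in W$, and $W$ is closed since it is finite-dimensional.

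The hypothesis $V_0=\{0\}$ is not really needed; it only fixes the indexing to nontrivial $\lambda$. I expect the only real obstacle to be the justification that isotypic projections preserve every invariant subspace, via Schur's lemma or the character formula. Everything else is bookkeeping.
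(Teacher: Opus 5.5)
Your proof is correct and complete. The paper gives no argument of its own for this lemma; it only cites \cite[Section~2]{orbitopes}. Your key step is the standard fact behind that citation: every $G$-invariant subspace $W$, in particular $W=\spann(O_v)$, splits as $\bigoplus_\lambda (W\cap V_\lambda)$, so each $\pi_\lambda$ preserves $W$. You are also right that the no-fixed-vector hypothesis is not needed.

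One small caution concerns your alternative route. The paper works with real orthogonal representations; for example, the rotation planes $V_\alpha$ in Section~\ref{sec:abelian_groups} are real irreducibles of complex type. In that setting the constant in your character formula should be $\dim W_\lambda/\dim_{\R}\operatorname{End}_G(W_\lambda)$ rather than $\dim W_\lambda$. This does not affect your argument, because you only use that $\pi_\lambda$ has the form $\int_G f(g)\rho(g)\,dg$.
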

This lemma implies that we can compute separately $\spann Ad_G (\pi_\lambda H_0) $ for each individual isotypic component and join them together using direct sums to determine the plane that the orbitope generates. Then, from Lemma~\ref{lem:0_inside}, we can deduce that a sufficiently small ball in $\spann Ad_G (\pi_\lambda H_0) $ is contained inside $\conv Ad_G (\pi_\lambda H_0) $.

What remains is to show how to calculate each individual $\spann Ad_G (\pi_\lambda H_0) $. This will largely depend on the relative position of $\pi_\lambda H_0$ inside the isotypic component $V_\lambda$ and the action of $G$. In general, a complete description of such orbitopes is a difficult problem (see~\cite{orbitopes,Barvinok2005,Kobert2021} for some known examples). To partially solve this problem in the examples of this article, we will look for a convex body $K\subset \spann Ad_G (\pi_\lambda H_0)$ that can be described via a finite number of inequalities.


    


\section{Applications}
\label{sec:applications}

\subsection{Dynamical decoupling}
\label{sec:dynamical_decoupling}

In this section we prove Theorem~\ref{thm:decoupling}. From the Hamiltonian~\eqref{eq:bi-hamiltonians} we see that the pulse group is isomorphic to the pulse group $G$ generated only by $\{H_1,\dots,H_m\}$. Since the controls only act on the system and not on the bath, we have that the orbitope of effective Hamiltonians is determined by the representation $\rho(g) = Ad_g \otimes I$, which can be seen as a tensor product of the adjoint representation and the trivial representation. 

Denote by $\mathfrak H_S$ and $\mathfrak H_B$ the Hilbert spaces of the system and of the bath. Note that $\fsu(\mathfrak H_S)\otimes I$, $I\otimes \fsu(\mathfrak H_B)$ and $\fsu(\mathfrak H_S)\otimes \fsu(\mathfrak H_B)$ form subrepresentations. Therefore, to prove the theorem it is enough to find necessary and sufficient conditions for the orbitope
$$
\conv\{\rho(g)H_{int}\}
$$
to contain zero.

We can decompose $\fsu(\mathfrak H_S)$ into irreducible representations of $G$
$$
\fsu(\mathfrak H_S) = V_0 \oplus (\bigoplus_\lambda V_\lambda)
$$
Note that the subspace of trivial action for $\rho$ on $\fsu(\mathfrak H_S)\otimes \fsu(\mathfrak H_B)$ is given by
$$
V_0 \otimes \fsu(\mathfrak H_B).
$$
Therefore, since $H_{int}\in \fsu(\mathfrak H_S)\otimes \fsu(\mathfrak H_B)$ a necessary condition for decoupling from any bath coupling is $V_0 = \{0\}$. This is also a sufficient condition by Lemma~\ref{lem:0_inside}.

Finally we note that if we use a strategy that cancels all possible interaction terms in $\fsu(\mathfrak H_S)\otimes \fsu(\mathfrak H_B)$, then it must cancel out all possible terms in $\fsu(\mathfrak H_S)\otimes I$. Indeed, choose a basis $s_i$ of $\fsu(\mathfrak H_S)$. Then the interaction term can be written as
$$
H_{int}=\sum_{i,\alpha}a_{i\alpha} s_i\otimes b_\alpha,
$$
where all $b_\alpha$ are independent. Note that each individual term in the sum lies in a separate subrepresentation of $G$. Since $G$ acts trivially on the bath term, it will act with a single $g\in G$ on all of $s_i$ simultaneously. Then any control sequence that makes the interaction term vanish must annihilate each $s_i$. But it is a basis of $\fsu(\mathfrak H_S)$, and therefore also $H_S$ can be decomposed in this basis. As a consequence, $H_S$ will also vanish. This only leaves us with $I \otimes H_B$, which is a fixed element of the $\rho$-action and cannot be canceled out, and elements that generate pulses $\fg \otimes I$.

\subsection{The case of a continuous abelian pulse group}
\label{sec:abelian_groups}

Before proving Theorem~\ref{thm:qubit_networks}, let us make some general remarks about when $G$ is abelian and continuous. This will be the case in the first part of Theorem~\ref{thm:qubit_networks} and in models with $ZX$ cross-resonance in controlled Hamiltonians~\cite{MagesanGambetta2020}.

Assume that $G$ is abelian, which means that all $H_1,\dots,H_m$ commute. In this case $\conv (Ad_G H_0)$ can be characterized quite effectively, due to the fact that $\mathfrak{su}(n)$ is a real simple Lie algebra. Consider a maximal abelian subalgebra (also called a Cartan subalgebra) $\fh$ which extends $\fg$. Recall that $\dim \fh = n-1$. Then complexify everything. The complexification of $\fsu(n)$ is the Lie algebra $\mathfrak{sl}(n,\C)$. As a result, we can decompose $\mathfrak{sl}(n,\C)$ into irreducible representations with respect to the restriction of the adjoint action of $\fh_\C$:
$$
ad_X Y := [X,Y], \qquad \forall X \in \fh, Y\in \mathfrak{sl}(n,\C).
$$
This gives a decomposition of the form
$$
\fsl(n,\C) = \fh_\C \oplus (\bigoplus_{\alpha \in \Delta} \fg_\alpha) \oplus (\bigoplus_{\alpha \in \Delta} \fg_{-\alpha}),
$$
where $\Delta$ is called the \textit{set of positive roots}, which is a discrete subset of $\fh^*$ satisfying certain algebraic properties and $\fg_\alpha$ are called \textit{root spaces}. One can prove that $\dim \fg_\alpha = 1$. Moreover, it is possible to find
a set of generators $Q_\alpha$ of $\fg_\alpha$, with the property that $Q_\alpha^\dagger = Q_{-\alpha}$ and such that
$$
[H,Q_\alpha] = -i\alpha(H)Q_\alpha, 
$$
where, by an abuse of the previous notation, $H \in \fh_\C$. Usually there is no $-i$ term in the formula above, because the standard matrix commutator is used and not the commutator~\eqref{eq:Lie_bracket} that we adopt in this paper. The remaining commutators can be found in~\cite{Humphreys1972} (modulo the multiplicative $-i$ terms), even though they are not important for our purposes.

The next step is to pass to the real form $\fsu(n)$ identified with the space of traceless Hermitian matrices of order $n$ with the Lie bracket given by~\eqref{eq:Lie_bracket}. To do this we introduce
\begin{align*}
    X_\alpha &:= Q_\alpha + Q_{-\alpha}\\
    Y_\alpha &:= i(Q_{\alpha}-Q_{-\alpha})
\end{align*}
and we obtain
$$
[H,X_\alpha] = -\alpha(H)Y_\alpha, \qquad [H,Y_\alpha] = \alpha(H)X_\alpha.
$$
meaning that $\fsu(n)$ under the adjoint action of $\fh$ can be effectively decomposed into $\fh$ and subspaces
$$
V_\alpha = \spann_\R\{X_\alpha,Y_\alpha\},
$$
where $\alpha \in \Delta$, and we see that $\fh$ acts as infinitesimal rotations inside $V_\alpha$.

We can now pass to the action of the original Lie group $G$. To do this, we restrict the adjoint action of $\fh$ on $\fsu(n)$ to $\fg$. Then if $\alpha(H) = 0$ for all $H\in \fg$, the action on $V_\alpha$ becomes trivial. Otherwise, it is still a rotation inside $V_\alpha$. It might also happen that for two positive roots $\alpha_1 \neq \alpha_2$ one has 
\begin{equation}
    \label{eq:root_eqivalent}
    \alpha_1|_{\fg} = \pm\alpha_2|_{\fg}.
\end{equation} 
In this case $V_{\alpha_1}\simeq V_{\alpha_2}$ as representations, and the action of $G$ will be the diagonal action on the sum of $\spann\{X_{\alpha_1},Y_{\alpha_1}\}$ and $\spann\{X_{\alpha_2},Y_{\alpha_2}\}$. Note that in order to get equivalent representations when the signs of $\alpha_1$ and $\alpha_2$ do not match, one needs to replace $Y_\alpha$ with $-Y_\alpha$. If two roots $\alpha_1,\alpha_2$ satisfy~\eqref{eq:root_eqivalent}, then we say that they are $\fg$-equivalent and denote it by $\alpha_1 \sim_\fg \alpha_2$.

Therefore, we have proven the following theorem.

 \begin{theorem}
     Let $G\subset SU(n)$ be a closed connected abelian subgroup, let $\fh$ be a maximal abelian subalgebra that contains $\fg$ and let $\Delta$ be the corresponding set of positive roots. Let $\tilde \Delta := \Delta/\sim_\fg$. Then $\fsu(n)$ is decomposed into subrepresentations of $G$ as
     $$
     \fsu(n) = \fh \oplus (\bigoplus_{\gamma \in \tilde \Delta} W_\gamma)
     $$
     where 
     $$
     W_\gamma = \bigoplus_{[\alpha]=\gamma}V_\alpha.
     $$
 \end{theorem}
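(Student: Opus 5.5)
The plan is to start from the root space decomposition of $\fsl(n,\C)$ with respect to a Cartan subalgebra $\fh \supset \fg$, pass to the real form $\fsu(n)$, and then restrict the action from $\fh$ to $\fg$ and finally to $G$.

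\textbf{Choice of $\fh$.} Since $G$ is closed, connected and abelian, $\fg$ is an abelian subalgebra of traceless Hermitian matrices. Its elements commute and are Hermitian, so they can be simultaneously diagonalized by some unitary $U$. Conjugating by $U$, I will take $\fh$ to be the diagonal traceless matrices. Then $\fh$ is maximal abelian and contains $\fg$.

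\textbf{Root spaces and the real planes.} For $\fh$ diagonal, the root vectors are the elementary matrices $Q_\alpha = E_{jk}$ with $j<k$, and $Q_{-\alpha} = E_{kj} = Q_\alpha^\dagger$. A direct computation with the bracket~\eqref{eq:Lie_bracket} gives $[H,Q_\alpha] = -i\alpha(H)Q_\alpha$. From this I will derive
$$
[H,X_\alpha] = -\alpha(H)Y_\alpha,\qquad [H,Y_\alpha] = \alpha(H)X_\alpha,
$$
so that
$$
\fsu(n) = \fh \oplus \bigoplus_{\alpha\in\Delta} V_\alpha,
$$
and this decomposition is orthogonal for~\eqref{eq:scalar_product}.

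\textbf{Passing to the group.} Because $G$ is connected and $\fg \subset \fh$, each element of $G$ has the form $g = e^{-iH}$ with $H \in \fg$ (the exponential map of a compact connected abelian group is onto). Hence $Ad_g$ acts trivially on $\fh$ and acts on $V_\alpha$ as the rotation by angle $\alpha(H)$. It follows that $\fh$ and every $V_\alpha$ are $G$-invariant, which already gives the displayed decomposition once the $V_\alpha$ are grouped.

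\textbf{Grouping by the equivalence $\sim_\fg$.} The $W_\gamma$ are sums of invariant subspaces, hence subrepresentations. To justify the grouping:
\begin{itemize}
\item If $\alpha_1 \sim_\fg \alpha_2$, then $V_{\alpha_1}$ and $V_{\alpha_2}$ carry the same rotation character. When the signs differ, replacing $Y_\alpha$ by $-Y_\alpha$ makes the two representations isomorphic.
\item If $\alpha_1 \not\sim_\fg \alpha_2$, the real two-dimensional representations differ, since their complexified characters $e^{\pm i\alpha(H)}$ differ as functions on $\fg$, and hence on $G$.
\end{itemize}
Therefore each $W_\gamma$ with $\gamma|_\fg \neq 0$ is exactly one isotypic component. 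The classes with $\gamma|_\fg = 0$, together with $\fh$, make up the fixed subspace.

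\textbf{Main obstacle.} The delicate point is the claim that non-equivalent restricted roots give non-isomorphic representations of $G$, not merely of $\fg$. This requires connectedness, so that the characters are determined by their derivatives. If full isotypicity is not needed, the theorem as stated only asserts a decomposition into subrepresentations, which follows from invariance alone.
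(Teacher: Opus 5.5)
Your proposal is correct and follows the paper's own route. You use the root decomposition for a Cartan subalgebra containing $\fg$, the real planes $V_\alpha=\spann\{X_\alpha,Y_\alpha\}$ on which $\fh$ acts by rotations, restriction to $\fg$, and grouping of $\fg$-equivalent roots. You make this concrete by diagonalizing, and you argue the passage from $\fg$ to $G$ and the non-isomorphism of inequivalent classes more carefully than the paper does. One small fix: the statement fixes an arbitrary maximal abelian $\fh\supset\fg$, so you should simultaneously diagonalize the given $\fh$ itself, rather than diagonalizing $\fg$ and then choosing $\fh$. Maximality then forces the image of $\fh$ to be all traceless diagonal matrices, and $\fg$ is diagonalized along with it.
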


In the next subsection we will use this theorem to prove the first part of Theorem~\ref{thm:qubit_networks}.

\subsection{Qubit networks with $X$-controls}
\label{sec:qubit_networks_X}
Consider the setting of the first part of Theorem~\ref{thm:qubit_networks}. In this case clearly $H_1,\dots,H_N$ generate an abelian Lie algebra given by $\spann\{X_1,\dots,X_N\}$. The first step is to understand the closure of the algebraic Lie group that integrates this Lie algebra. 

We start by changing the basis, so that the individual $X$ Hamiltonians become $Z$ Hamiltonians. This new basis will simplify the computation of restricted roots and root spaces. This can be achieved by applying $H^{\otimes N}$, where $H$, again by an abuse of notation, is the standard Hadamard matrix
$$
H = \frac{1}{\sqrt{2}}\begin{pmatrix}
1 & 1\\
1 & -1
\end{pmatrix}.
$$
It is straightforward to check that
$$
HXH^\dagger = Z, \quad HYH^\dagger = -Y, \quad HZH^\dagger = X.
$$

Therefore, the Hamiltonians from Theorem~\ref{thm:qubit_networks} in the new basis will be written as
\begin{align}
      \label{eq:supra_model_new}
    H_0 &= \omega \sum_{j=1}^N X_j + \sum_{j \sim k} \beta Z_jZ_k + \gamma Y_jY_k + \delta X_jX_k, \\
    H_j &= Z_j, j=1,\dots N  
\end{align}

The next step is to consider a maximal abelian subalgebra $\fh$ that contains all $Z_j$. Since we are considering $\fsu(2^N)$, the dimension of the Cartan subalgebra must be $2^N-1$. It is straightforward to see that all possible tensor products of $Z_j$ give rise to a Cartan subalgebra. 

After that we need to identify the corresponding subspaces $V_\alpha$. To do this we introduce the usual notation from quantum information:
$$
x,y,z\in \{0,1\}^N,
$$
and a slightly less usual one
$$
Z_{\ket x} = M_1\otimes \dots \otimes M_N,
$$
where
$$
M_j = \begin{cases}
I, & \text{if } x_j = 0,\\
Z, & \text{if } x_j = 1.
\end{cases}
$$
Then we have, by construction,
$$
Z_{\ket x}\ket{y} = (-1)^{x\cdot y}\ket y.
$$
We can order $\{0,1\}^N$ using lexicographic order. Then for $\ket y < \ket z$ we can define 
\begin{align*}
    X_{y,z} &= \ket y \bra z+ \ket z \bra y, \\
    Y_{y,z} &= i(\ket y \bra z- \ket z \bra y).
\end{align*}
A straightforward calculation then shows that
\begin{align*}
    [Z_{\ket x},X_{y,z}] = -\left((-1)^{x\cdot y}-(-1)^{x\cdot z}\right) Y_{y,z}\\
    [Z_{\ket x},Y_{y,z}] = \left((-1)^{x\cdot y}-(-1)^{x\cdot z}\right) X_{y,z}
\end{align*}
This is exactly the root space decomposition of $\fsu(2^N)$ relative to the Cartan subalgebra generated by $Z_{\ket x}$. The positive roots correspond to pairs $\ket y <\ket z$ and their evaluation on $Z_{\ket x}$ is given by
$$
(\ket y,\ket z)(Z_{\ket x}) = (-1)^{x\cdot y}-(-1)^{x\cdot z},
$$
and we can see that it can take values $\{-2,0,2\}$.

Let us now consider the restriction of $Z_j$, $j=1,\dots N$, which in our new notation is just $Z_{\ket{1_j}}$,  to each individual root space. We have that 
\begin{equation}
    \label{eq:root_space_values}
    (\ket y,\ket z)(Z_j) = \begin{cases}
    0, &\text{if } y_j= z_j;\\
    2, &\text{if } y_j=0 \text{ and } z_j= 1;\\
    -2, &\text{if } y_j=1 \text{ and } z_j= 0.
\end{cases}
\end{equation}
So we see that there is no root that could annihilate all of the $Z_j$. Indeed we have $\ket y \neq \ket z$, because if not, it would belong to the Cartan subalgebra $\fh$. Hence, there must be at least one digit where they are different. Therefore, the trivial subrepresentation of $\fg$ is exactly the Cartan subalgebra $\fh$.

Let us now identify the root spaces which have the same set of root values with respect to the elements $Z_j$. From~\eqref{eq:root_space_values} one can notice that one ambiguity in the value of $(\ket y,\ket z)(Z_j)$ can come when $y_j = z_j$ (they can be either both zero or both one). Therefore, we will have the same root values for two different pairs $(\ket y,\ket z)$ and $(\ket {\tilde y},\ket {\tilde z})$ if for some indices $j$ we have $y_j = z_j$ and $\tilde y_j = \tilde z_j$ and at the complementary indices $k$ we have $y_k = \tilde y_k$ and $z_k = \tilde z_k$. Another possibility could have been that the value of the root has the opposite sign. But this would require flipping to the opposite value both $y_k$ and $z_k$. But such an operation would reverse the order of $\ket y$ and $\ket z$. Therefore, only having $y_j = z_j$ and $\tilde y_j = \tilde z_j$ can produce root spaces with equivalent representations.

Now we pass to studying the orbitope $\conv \Ad_G H_0$. As discussed previously, the trivial subrepresentation is given exactly by the Cartan subalgebra $\fh$. Therefore, the projection to the subspace of fixed elements is given by
$$
\pi_0 H_0 = \beta\sum_{j\sim k} Z_jZ_k.
$$
Therefore, there is no way of decoupling individual qubits for the original Hamiltonians~\eqref{eq:qubit_hamiltonian}-\eqref{eq:control_hamiltonians} unless $\beta = 0$. 

We continue by determining projections of $\conv (Ad_G H_0)$ to the isotypic components. The Hamiltonians $X_j$ and $\gamma Y_jY_k + \delta X_jX_k$ each belong to different isotypic components. Indeed, 
\begin{itemize}
    \item Since $X_j$ flips only a single digit, it can be written as a sum of operators $X_{x,y}$, where $x_j=0$, $y_j = 1$ and for all the other indices $x_k=y_k$;
    \item Similarly, each $\gamma Y_jY_k + \delta X_jX_k$ can be written as a sum of operators $X_{x,y}$, where $x_s=y_s$ for $s\notin\{j,k\}$ and $x_j = x_k = 0 $ and $y_j = y_k = 1$, or $x_j = y_k = 0$ and $x_k = y_j = 1$. More precisely, we have
\begin{align*}
        \gamma Y_jY_k + \delta X_jX_k &= (\delta-\gamma)(\ket{0_j0_k}\bra{1_j1_k}+\ket{1_j1_k}\bra{0_j0_k} )\otimes I_{\Z_N\setminus \{j,k\}} + \\
        &+(\delta+\gamma)(\ket{0_j1_k}\bra{1_j0_k}+\ket{1_j0_k}\bra{0_j1_k} )\otimes I_{\Z_N\setminus \{j,k\}}
\end{align*}
\end{itemize}  
Therefore, each $X_j$ lies in a unique isotypic component, while $\gamma Y_jY_k + \delta X_jX_k$ lies in a sum of two isotypic components. To simplify the notation for the final result of this analysis, we denote
\begin{align*}
    X_{0_j,0_k} &:= (\ket{0_j0_k}\bra{1_j1_k}+\ket{1_j1_k}\bra{0_j0_k} )\otimes I_{\Z_N\setminus \{j,k\}} \\
    Y_{0_j,0_k} &:= i(\ket{0_j0_k}\bra{1_j1_k}-\ket{1_j1_k}\bra{0_j0_k} )\otimes I_{\Z_N\setminus \{j,k\}} \\
    X_{0_j,1_k} &:= (\ket{0_j1_k}\bra{1_j0_k}+\ket{1_j0_k}\bra{0_j1_k})\otimes I_{\Z_N\setminus \{j,k\}} \\
    Y_{0_j,1_k} &:= i(\ket{0_j1_k}\bra{1_j0_k}-\ket{1_j0_k}\bra{0_j1_k})\otimes I_{\Z_N\setminus \{j,k\}}
\end{align*}
Then, using that the action on each root space is a 2D rotation, we obtain
\begin{align*}
    \Aff (\Ad_G H_0) &= \beta\sum_{j\sim k} Z_jZ_k +\omega\left(\bigoplus_{j}\spann \{ X_j,Y_j\}\right) \oplus \\
    &\oplus (\delta - \gamma)\left(\bigoplus_{j\sim k}\spann \{X_{0_j,0_k}, Y_{0_j,0_k}\}\right) \oplus \\
    &\oplus (\delta + \gamma)\left(\bigoplus_{j\sim k}\spann \{X_{0_j,1_k}, Y_{0_j,1_k}\}\right).
\end{align*}

As a final step it remains to switch back to the original basis by conjugating by the Hadamard matrix. It is straightforward to check that
\begin{align*}
    HX_{0_j,0_k}H^\dagger &= \frac{1}{2}(Z_jZ_k - Y_jY_k),\\
    HY_{0_j,0_k}H^\dagger &= \frac{1}{2}(Y_jZ_k + Z_jY_k),\\
    HX_{0_j,1_k}H^\dagger &= \frac{1}{2}(Z_jZ_k + Y_jY_k),\\
    HY_{0_j,1_k}H^\dagger &= \frac{1}{2}(Y_jZ_k - Z_jY_k).
\end{align*}
This gives the final result~\eqref{eq:orbitope_many_qubits_X_pulses}.

\subsection{Qubit networks with $X,Y$ controls}
\label{sec:qubit_networks_XY}

In the previous example we saw that only applying $X$ pulses is not enough to effectively control the system, since the term $\beta\sum X_jX_k$ cannot be compensated for. What happens if we also add $Y$ pulses? In this case $\Lie\{X,Y\}\simeq \fsu(2)$, and the pulse group becomes a combination of copies of $SU(2)$ acting on individual qubits.
$$
G = \prod_{j=1}^N (SU(2)_j \otimes I_{\Z_N\setminus \{j\}})
$$
We now need to decompose $\fsu(2^N)$ with respect to irreducible representations and isotypic components. In this case, it can be seen directly that the spaces
$$
V_{j_1,\dots,j_k} := \fsu(2)_{j_1} \otimes \dots \otimes \fsu(2)_{j_k} \otimes I_{\Z_N\setminus \{j_1,\dots,j_k\}}
$$
form invariant subspaces and have no non-trivial invariant subspaces. Also, none of them are equivalent to each other. Therefore $V_{j_1,\dots,j_k}$ are also isotypic components and by Lemma~\ref{lemm:decomposition} if the projection of $H_0$ to one of those subspaces is not zero, then we can generate any Hamiltonian in a sufficiently small neighborhood of $0\in V_{j_1,\dots,j_k}$. In the case of the drift Hamiltonian~\eqref{eq:qubit_drift}, we see that one can generate Hamiltonians lying inside $V_{j,k}$ for $j\sim k$ and inside $V_j$, which are also covered via pulses.

\section*{Appendix A: Proof of Lemma~\ref{lem:0_inside} and Theorem~\ref{thm:quantitative_qubit_networks}}

We start proof of Lemma~\ref{lem:0_inside} by checking that $0$ belongs to the interior of an orbitope of a representation that has no nonzero fixed elements. It requires two ingredients: a proposition about representability of points inside a convex set as barycenters of measures, and Weil's integration formula.

\begin{definition}
    Let $\Omega$ be a non-empty subset of a locally convex space $X$ and let $\mu$ be a Radon measure on $\Omega$. A point $x\in X$ is said to be represented by $\mu$ if for all linear functions $f\in X^*$
    $$
    f(x) = \int_\Omega f(y) d\mu(y)
    $$
\end{definition}

We have the following classical proposition, which can be found in~\cite[Proposition 1.1]{Choquet}.
\begin{proposition}
\label{prop:choquet}
Let $\mathcal{K}$ be a compact subset of a locally convex topological vector space $X$ such that $\Omega =\operatorname{conv}(\mathcal{\overline{K}})$ is compact. Then for each Radon probability measure $\mu$, there exists a unique $x \in \Omega$ such that $x$ is represented by $\mu$.
\end{proposition}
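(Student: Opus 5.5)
The statement is the classical barycenter result \cite[Proposition 1.1]{Choquet}, specialized here to finite-dimensional $V$. I would prove existence by a separation argument and uniqueness by the fact that linear functionals separate points. Existence is the only step with real content.

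\textbf{Uniqueness.} Suppose $x,x'\in\Omega$ are both represented by $\mu$. Then for every $f\in X^*$,
$$
f(x)=\int_\Omega f\,d\mu=f(x').
$$
In a locally convex Hausdorff space, $X^*$ separates points by Hahn--Banach, so $x=x'$.

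\textbf{Existence.} For each $f\in X^*$, define the closed set
$$
H_f:=\Big\{y\in\Omega:\ f(y)=\int_\Omega f\,d\mu\Big\}.
$$
Each $H_f$ is closed in the compact set $\Omega$. A point represented by $\mu$ is exactly a point of $\bigcap_{f\in X^*}H_f$. By compactness, it suffices to show that every finite intersection $H_{f_1}\cap\dots\cap H_{f_n}$ is nonempty.

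To do this, consider the continuous linear map
$$
L=(f_1,\dots,f_n):X\to\R^n,
$$
and set $p:=\int_\Omega L\,d\mu\in\R^n$. The set $L(\Omega)$ is compact and convex. I claim $p\in L(\Omega)$. If not, finite-dimensional separation gives $c\in\R^n$ with
$$
\langle c,p\rangle>\max_{y\in\Omega}\langle c,L(y)\rangle.
$$
Integrating this inequality against the probability measure $\mu$, which is supported on $\Omega$, gives
$$
\langle c,p\rangle\le\max_{y\in\Omega}\langle c,L(y)\rangle,
$$
a contradiction. So there is $y\in\Omega$ with $L(y)=p$, and this $y$ lies in $H_{f_1}\cap\dots\cap H_{f_n}$.

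\textbf{Measure-theoretic details.} I expect these to be the main obstacle, though they are mild. One must check that $\mu$ lives on $\Omega$, or on $\overline{\mathcal K}\subset\Omega$, so that the $f$ are bounded continuous functions and the integrals are finite. One must also check that the inequality $\langle c,L(y)\rangle\le M$ for all $y\in\Omega$ integrates correctly, which needs $\mu(\Omega)=1$. Both are immediate from $\mu$ being a Radon probability measure on the compact set $\Omega$.

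In our application $X=V$ is finite-dimensional. Then $H_f$ for $f$ ranging over a basis of $V^*$ already determines the point, so the compactness step collapses to the single separation argument above.
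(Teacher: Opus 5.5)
Your proof is correct and is the standard argument behind the result the paper cites, \cite[Proposition 1.1]{Choquet}: uniqueness because $X^*$ separates points, and existence via the finite intersection property of the closed sets $H_f\subset\Omega$, where each finite intersection is nonempty by strictly separating $p$ from the compact convex set $L(\Omega)\subset\R^n$. The paper gives no proof beyond this citation, and your argument covers the general locally convex case, not only the finite-dimensional case your opening sentence suggests.
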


\begin{theorem}[Weil's Integration Formula, cf.~{\cite[Theorem~2.51]{weyl}}]
Let $G$ be a compact, unimodular group and $H$ be a closed subgroup of $G$. Let $d\mu_G$ and $d\mu_H$ be left Haar measures on $G$ and $H$, respectively. Then there exists a regular Borel measure $d\mu_{G/H}$ on the quotient space $G/H$ which is invariant under the action of $G$ by left translation, such that for any $f \in C(G)$, we have:
\[
\frac{1}{|G|}\int_G f(g)\,dg
=
\frac{1}{|G/H|}\int_{G/H}\left(
\frac{1}{|H|}\int_{H}
f(gh) d\mu_{H}(h)\right)
d\mu_{G/H}(gH)
\]
where $g H$ represents the left coset in the quotient space $G/H$.
\end{theorem}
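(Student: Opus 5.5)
The plan is to build the measure on $G/H$ from the Haar measure of $G$ via the averaging map, using the Riesz representation theorem. For $f\in C(G)$ define
$$
f^H(g) := \frac{1}{|H|}\int_H f(gh)\,d\mu_H(h).
$$
This function is right $H$-invariant, because $\mu_H$ is left invariant, so it descends to a function on $G/H$. It is continuous, because $f$ is uniformly continuous on the compact group $G$.

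The first step is to show that $f\mapsto f^H$ maps $C(G)$ onto $C(G/H)$. Given $F\in C(G/H)$, I would take $f=F\circ q$, where $q:G\to G/H$ is the quotient map. Then $f^H=F$, because the integrand is constant on each coset and $\mu_H$ is normalised by $|H|$.

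Next I would define a linear functional $\Lambda$ on $C(G/H)$ by
$$
\Lambda(F) := \frac{|G/H|}{|G|}\int_G f\,d\mu_G \quad\text{for any } f\in C(G) \text{ with } f^H=F.
$$
The key step, and the only real obstacle, is showing that $\Lambda$ is well defined, that is, $f^H=0$ implies $\int_G f\,d\mu_G=0$. I would argue as follows.
\begin{enumerate}
\item By linearity it suffices to show that $\int_G f^H\,d\mu_G=\frac{1}{|G|}\Bigl(\int_G f\,d\mu_G\Bigr)\,|G|$ up to the normalising constants, i.e.\ that $\int_G f^H\,d\mu_G$ is a fixed multiple of $\int_G f\,d\mu_G$.
\item Apply Fubini on the compact space $G\times H$ to write
$$
\int_G\int_H f(gh)\,d\mu_H(h)\,d\mu_G(g)=\int_H\int_G f(gh)\,d\mu_G(g)\,d\mu_H(h).
$$
\item For each fixed $h$, the inner integral $\int_G f(gh)\,d\mu_G(g)$ equals $\int_G f\,d\mu_G$, because $G$ is compact and hence unimodular, so $\mu_G$ is also right invariant.
\end{enumerate}
This gives $\int_G f^H\,d\mu_G=\frac{|H|}{|H|}\int_G f\,d\mu_G$, which proves well-definedness. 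Here it is essential that the modular functions of $G$ and $H$ agree, and this is where the unimodularity hypothesis enters; for compact $H$ it is automatic.

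Once $\Lambda$ is well defined, positivity follows. If $F\geq 0$, choose $f=F\circ q\geq 0$, so $\Lambda(F)\ge0$. The Riesz representation theorem then gives a regular Borel measure $\mu_{G/H}$ with $\Lambda(F)=\int_{G/H}F\,d\mu_{G/H}$.

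Left invariance comes from left invariance of $\mu_G$. For $x\in G$, the translate $F(x\,\cdot)$ is the average of $f(x\,\cdot)$, and $\int_G f(xg)\,d\mu_G=\int_G f\,d\mu_G$.

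Finally, taking $F\equiv 1$ (so $f\equiv1$) shows that the total mass of $\mu_{G/H}$ is $|G/H|$. The constants in the definition of $\Lambda$ were chosen exactly so that unwinding the definitions yields the displayed identity with the normalisations $1/|G|$, $1/|G/H|$ and $1/|H|$.
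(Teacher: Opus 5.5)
Your proof is correct. The paper gives no proof of its own and only quotes this result, and your construction is exactly the standard proof of the quotient-integral formula: the averaging map $f\mapsto f^H$ onto $C(G/H)$, well-definedness of the functional via Fubini and right invariance of $\mu_G$, then the Riesz representation theorem, specialized to the compact case where $F\circ q$ serves as the preimage and no Bruhat function is needed. One cosmetic fix: $|G/H|$ should enter the definition of $\Lambda$ as a freely prescribed positive constant, not as the mass of the not-yet-constructed measure, which is harmless because both sides of the identity are unchanged under rescaling $\mu_{G/H}$.
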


\begin{proof}[Proof of Lemma~\ref{lem:0_inside}]
    Fix $v\neq 0$. Then the integral of $\rho(g)v$ over $G$ is the projection to the space of fixed elements of the action. Under the assumptions, this means that
    $$
    \frac{1}{|G|}\int_G \rho(g)(v) d\mu_G(g) = 0
    $$
    Consider next the isotropy group $G_0\subset G$ of all group elements that fix $v$. Then $G/G_0$ is diffeomorphic to the orbit $O_v$ and carries a measure $\mu$, which is the pull-back measure of $\mu_{G/G_0}$. Hence, for any $f\in V^*$, Weil's formula gives
\begin{align*}
        0 = f(0) &= \frac{1}{|G|}\int_G f(\rho(g)v) d\mu_G(g) = \frac{1}{|G/G_0|}\int_{G/G_0}\left(
    \frac{1}{|G_0|}\int_{G_0}
    f(\rho(gh)v) d\mu_{G_0}(h)\right)
    d\mu_{G/G_0}(gG_0) = \\
    &= \frac{1}{|G/G_0|}\int_{G/G_0}f\left(\rho(g)
    \frac{1}{|G_0|}\int_{G_0}
    \rho(h)v d\mu_{G_0}(h)\right)\,
    d\mu_{G/G_0}(gG_0) = \\
    &= \frac{1}{|G/G_0|}\int_{G/G_0}f(\rho(g)v)
    d\mu_{G/G_0}(gG_0) = \frac{1}{|O_v|}\int_{O_v}f(x)
    d\mu(x) ,
\end{align*}
where in the second line we used the fact that $\rho$ is a representation, and in the third that $\rho(h)v = v$. All the assumptions of Proposition~\ref{prop:choquet} are satisfied. Hence, we have that $0$ is representable and therefore belongs to $\conv(O_v)$.

    The next part is to prove that zero is actually inside the interior of the orbitope. To do so we can use support functions of convex sets. It is known that given a compact convex set $K\subset \R^n$ with non-empty interior we have for $x\in K$ that $x\in \p K$ if and only if there exists $a\in \R^n \setminus\{0\}$ such that
    $$
    \langle a,x\rangle = \sup_{y\in K }\langle a,y\rangle,
    $$
    where $\langle\cdot,\cdot\rangle$ is a scalar product on $\R^n$. 

    Denote by 
    $$
    C_v:=\spann\{\rho(g)v:g\in G \}
    $$
    the linear space spanned by the orbitope $\conv (O_v)$. To prove that $0\in \operatorname{int} \conv(O_v)$, we need to show that for any $a\in C_v\setminus \{0\}$, we can find $y\in O_v$ such that
    $$
    \langle a,y \rangle <0.
    $$
    To do this, we integrate $\langle a,\rho(g) v \rangle$ over $G$. By the previous result we have that
    $$
    \frac{1}{|G|}\int_G \langle a, \rho(g) v\rangle d\mu(g) = 0.
    $$
    But then there are two options. Either $\langle a, \rho(g) v\rangle \equiv 0$, or it must take both positive and negative values as a function of $g$. 
    The first option is impossible, because $\rho(g)v$ spans the whole of $C_v$, so it cannot be contained in a strict subspace. Therefore, there must exist $g\in G$ such that $\langle a,\rho(g)v \rangle < 0$. Therefore, $0$ belongs to the interior of $\conv(O_v)$.

\end{proof}

We will now prove Theorem~\ref{thm:quantitative_qubit_networks}. The proof is based on two lemmas.

\begin{lemma}
\label{lem:convexification_matryoshka}
If $H\in \conv (\Ad_G H_0)$, then 
$$
\conv (\Ad_G H) \subset \conv (\Ad_G H_0)
$$
\end{lemma}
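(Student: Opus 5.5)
The plan is to show that the orbitope $K:=\conv(\Ad_G H_0)$ is invariant under the action of $G$. Once this is known, the inclusion follows immediately from the minimality of convex hulls.

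First I would check $G$-invariance. Each $\Ad_g$ is a linear map of $\fsu(n)$. Linear maps send convex combinations to convex combinations, so $\Ad_g(\conv S)=\conv(\Ad_g S)$ for any set $S$. Applying this to $S=\Ad_G H_0$ gives
$$
\Ad_g K=\conv(\Ad_g\Ad_G H_0)=\conv(\Ad_{gG}H_0)=K,
$$
because $gG=G$.

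Second, take $H\in K$. By invariance, every point $\Ad_g H$ of the orbit $\Ad_G H$ lies in $K$, so $\Ad_G H\subset K$. Since $K$ is convex, it contains the convex hull of any of its subsets. Hence $\conv(\Ad_G H)\subset K$, which is the claim.

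I do not expect a real obstacle. The only point needing care is whether $\conv$ means the algebraic convex hull or its closure. Since $G$ is compact, the orbit $\Ad_G H_0$ is compact in a finite-dimensional space. By Carathéodory's theorem its convex hull is then already compact, so nothing changes if $\conv$ is read as the closed convex hull. In either reading, $\Ad_g$ preserves the class of sets involved, because it is a linear homeomorphism, and the argument goes through verbatim. This lemma will then allow us to pass from $H_0$ to explicit points $H$ inside the orbitope, for instance the projections onto individual isotypic components. Their orbitopes are simpler and give the inner convex bodies needed for Theorem~\ref{thm:quantitative_qubit_networks}.
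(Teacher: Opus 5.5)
Your argument is correct and is essentially the paper's own proof: the paper only says the lemma is ``a direct application of definitions,'' and your two steps spell that out. Linearity of $\Ad_g$ together with $\Ad_g\Ad_G H_0=\Ad_G H_0$ makes $\conv(\Ad_G H_0)$ invariant under $G$, and convexity of that set then contains $\conv(\Ad_G H)$. Your side remark that the orbit is compact, so by Carath\'eodory its convex hull is already closed, is also correct but not needed.
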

The proof is a direct application of definitions. Averaging along subgroups in combination with this lemma will allow us to identify certain sufficiently simple subsets of $\conv (\Ad_G \pi_0^\perp H_0)$ that will be convexified to produce a sufficiently big convex set $K$ of full dimension inside $\conv (\Ad_G \pi_0^\perp H_0)$. A similar idea was used in~\cite{Agrachev_chambrion}, where the authors use projections to invariant subspaces inductively instead of averaging over subgroup actions and construct a full-dimensional polytope inside the orbitope of effective Hamiltonians.

The second lemma is the following one.
\begin{lemma}
\label{lem:convexification}
    Let $\R^{n} = V_1\oplus \dots \oplus V_m$. Assume $\dim V_j\geq1$ and $r_j>0$ for every $j$. Consider spheres inside each $V_j$
    $$
    S_j = \{x_j\in V_j:\|x_j\|= r_j\}.
    $$
    Then
\begin{equation}
    \label{eq:convex_set_desecription}
        \conv\{0,S_j:j=1,\dots, m\}= \left\{x\in \R^{n}:\sum_{j=1}^m\frac{\|x_j\|}{r_j} \leq 1\right\}
\end{equation}
\end{lemma}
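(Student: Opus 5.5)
We prove the two inclusions in \eqref{eq:convex_set_desecription} separately. Write every $x\in\R^n$ as $x=x_1+\dots+x_m$ with $x_j\in V_j$. We take the norm to be the Euclidean one and the decomposition to be orthogonal, as in the applications. In fact the argument below only uses that the spheres $S_j$ are norm spheres in $V_j$ and that $x\mapsto x_j$ is the linear projection along the other summands. Set $K:=\{x:\sum_j \|x_j\|/r_j\le 1\}$ and $C:=\conv\{0,S_j:j=1,\dots,m\}$.

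\emph{Inclusion $C\subset K$.} The function $\phi(x)=\sum_j \|x_j\|/r_j$ is a sum of norms composed with linear projections, so it is convex. Hence $K=\{\phi\le 1\}$ is convex. We then check the generators. First, $\phi(0)=0$. Second, a point $s\in S_j$ has $s_j=s$ and $s_k=0$ for $k\neq j$, so $\phi(s)=1$. Thus all generators lie in $K$, and since $K$ is convex, $C\subset K$.

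\emph{Inclusion $K\subset C$.} Take $x\in K$ and put $t_j:=\|x_j\|/r_j\ge 0$, so that $\sum_j t_j\le 1$. For each $j$ with $x_j\neq 0$, let $s_j:=r_j x_j/\|x_j\|\in S_j$. For $j$ with $x_j=0$, set $t_j=0$ and pick any $s_j\in S_j$; this is possible because $\dim V_j\ge 1$ and $r_j>0$ make $S_j$ nonempty. In both cases $x_j=t_js_j$. Therefore
$$
x=\sum_{j=1}^m t_j s_j+\Bigl(1-\sum_{j=1}^m t_j\Bigr)\cdot 0,
$$
which is a convex combination of points of $\{0\}\cup\bigcup_j S_j$. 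Hence $x\in C$.

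There is no real obstacle here. The only points needing care are the degenerate components $x_j=0$, handled by choosing an arbitrary point of the nonempty sphere with weight zero, and the role of $0$ as a generator, which absorbs the slack when $\sum_j t_j<1$. Note also that $C$ is the convex hull of a compact set in finite dimension, so it is already closed and no closure issue arises. The set $K$ is therefore exactly the convex body used later in the proof of Theorem~\ref{thm:quantitative_qubit_networks}.
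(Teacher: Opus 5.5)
Your proof is correct and matches the paper's argument: one inclusion follows from convexity of the right-hand set, and for the other you use the same explicit convex combination, with weights $\|x_j\|/r_j$, points $r_jx_j/\|x_j\|$ (an arbitrary point of $S_j$ when $x_j=0$), and the slack placed on $0$. Your justification of convexity via the convex function $\phi$ fills in what the paper calls clear.
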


\begin{proof}
    It is clear that the set on the right of~\eqref{eq:convex_set_desecription} is convex and contains each $S_j$ for $j=1,\dots,m$ and zero. Therefore, the inclusion from left to right follows. Let us now see that the other inclusion holds as well. Assume $x$ belongs to the set on the right. Then we can write it as

$$
x = \lambda_0 \cdot 0 + \sum_{j=1}^m\lambda_j y_j,
$$
    where 
    $$
    y_j = \frac{r_jx_j}{\|x_j\|}\quad\text{if }x_j\neq0.
    $$
    If $x_j=0$, choose any $y_j\in S_j$. In all cases, set
$$
\lambda_j = \frac{\|x_j\|}{r_j}, \qquad \lambda_0 = 1 -\sum_{j=1}^m \lambda_j.
$$
\end{proof}

We will apply this lemma in both cases of Theorem~\ref{thm:quantitative_qubit_networks}. In the first case, we work with $\pi_0^\perp H_0$ which is orthogonal to $\fg$ and restore the fixed term $\pi_0H_0$ afterwards. In the second case, we work with $\pi_{\fg^\perp}H_0$ and add the freely available local directions in $\fg$.

Let us start with the first case. By exponentiating the action of $\fg$, we can write explicitly  $Ad_G (\pi_0^\perp H_0)$. It contains all elements of the form

\begin{align}
    &\omega\sum_{j}(\cos 2\theta_j Z_j -\sin 2\theta_j Y_j) + \\
    + &\frac{(\delta - \gamma)}{2}\sum_{j\sim k}\bigl[\cos 2(\theta_j + \theta_k)(Z_jZ_k-Y_jY_k)- \sin 2(\theta_j + \theta_k)(Y_jZ_k+Z_jY_k)\bigr] + \\
    + &\frac{(\delta + \gamma)}{2}\sum_{j\sim k}\bigl[\cos 2(\theta_j - \theta_k)(Z_jZ_k+Y_jY_k)- \sin 2(\theta_j - \theta_k)(Y_jZ_k-Z_jY_k)\bigr].\label{eq:orbit_qubit_network_X}
\end{align}
where $\theta_j \in [0,\pi)$. Here we clearly see that
$$
G \simeq T^N = \{(\theta_1,\dots,\theta_N)\in [0,\pi)^N\}.
$$

We start by isolating $Z_j$. To do this, we look for the biggest subgroup $\tilde G\subset G$ which fixes $Z_j$. This must be the subgroup with $\theta_j = 0$. Thus we average over $\tilde G = \{(\theta_1,\dots,\theta_N):\theta_j = 0\}$. It is straightforward to check that the projection of $\pi_0^\perp H_0$ to the space of invariant elements of $\tilde G$ is exactly $\omega Z_j$. Therefore, by Lemma~\ref{lem:convexification_matryoshka} we have
$$
\omega \{ \cos 2\theta Z_j - \sin 2\theta Y_j : \theta \in [0,\pi) \} = \Ad_G(\omega Z_j) \subseteq \conv(\Ad_G(\pi_0^\perp H_0)).
$$

Next we would like to decouple the remaining terms. Again we average $\pi_0^\perp H_0$ over the subgroup of $G$ given by $\theta_j = \theta_k = 0$. This leaves us with  
\begin{align}
    &\omega Z_j + \omega Z_k + \frac{(\delta - \gamma)}{2}(Z_jZ_k-Y_jY_k) +\frac{(\delta + \gamma)}{2}(Z_jZ_k+Y_jY_k).\nonumber
\end{align}
Then we can average this result over some discrete subgroups. We consider subgroups for which $\theta_l = 0$ for $l\notin\{j,k\}$. First, we apply the group generated by $(\theta_j = \theta_k = \pi/2)$. The action of this group reverses the signs in front of $Z_j$ and $Z_k$. After that we can apply the averaging over the subgroup generated by $(\theta_j = \theta_k = \pi/4)$, which isolates the term $Z_jZ_k+Y_jY_k$, or over the subgroup generated by $(\theta_j = -\theta_k = \pi/4)$, which isolates the term $Z_jZ_k-Y_jY_k$. Therefore, we find that also
\begin{align*}
\frac{\delta - \gamma}{2}\{\cos 2\theta(Z_jZ_k-Y_jY_k)+ \sin 2\theta(Y_jZ_k+Z_jY_k):\theta \in [0,\pi)\} \subseteq \conv(\Ad_G(\pi_0^\perp H_0)),\\
\frac{\delta + \gamma}{2}\{\cos 2\theta(Z_jZ_k+Y_jY_k)+ \sin 2\theta(Y_jZ_k-Z_jY_k):\theta \in [0,\pi)\} \subseteq \conv(\Ad_G(\pi_0^\perp H_0)).
\end{align*}

Summarizing everything, we see that $\conv(Ad_G(\pi_0^\perp H_0))$ must contain the convexification of a number of circles described above. Thus we can apply Lemma~\ref{lem:convexification}. It gives a sufficient condition for an effective Hamiltonian to live inside the orbitope. We have proven in Theorem~\ref{thm:qubit_networks} that $\conv (Ad_G \pi_0^\perp H_0)$ includes Hamiltonians of the form 
\begin{align}
    &\omega\sum_{j=1}^N(a_{j} Z_j +b_j Y_j) + \nonumber\\
    + &\frac{(\delta - \gamma)}{2}\sum_{j\sim k}\left(a_{jk}(Z_jZ_k-Y_jY_k)+ b_{jk}(Y_jZ_k+Z_jY_k)\right) + \nonumber\\
    + &\frac{(\delta + \gamma)}{2}\sum_{j\sim k}\left(\tilde a_{jk}(Z_jZ_k+Y_jY_k)+ \tilde b_{jk}(Y_jZ_k-Z_jY_k)\right). \nonumber
\end{align}
By Lemma~\ref{lem:convexification}, the following condition on the coefficients is sufficient:
\begin{align}
    &\sum_{j=1}^N\frac{\|(a_{j} Z_j +b_j Y_j)\|}{\|Z_j\|} + \sum_{j\sim k}\frac{\|a_{jk}(Z_jZ_k-Y_jY_k)+ b_{jk}(Y_jZ_k+Z_jY_k)\|}{\|Z_jZ_k-Y_jY_k\|} + \nonumber\\
    + &\sum_{j\sim k}\frac{\|\tilde a_{jk}(Z_jZ_k+Y_jY_k)+ \tilde b_{jk}(Y_jZ_k-Z_jY_k)\|}{\|Z_jZ_k+Y_jY_k\|} \leq 1,\label{eq:ineq_effective_hamiltonians}
\end{align}
For nonzero amplitudes, we have canceled the absolute values of $\omega$, $(\delta-\gamma)/2$ and $(\delta+\gamma)/2$ that appear simultaneously in the numerator and denominator. Components with zero amplitude are omitted when applying Lemma~\ref{lem:convexification}. Thus it remains to compute individual norms. Using the fact that
$\Tr(A\otimes B) = (\Tr A)(\Tr B)$, we see that $Z_j$ is orthogonal to $Y_j$ and $\{Z_jZ_k,Y_jY_k,Y_jZ_k,Z_jY_k\}$ form an orthogonal set. In addition, we have
$$
\|Z_j\|^2 = \Tr(Z_j^2) = \Tr(I^{\otimes N})= \Tr(I)^N = 2^N
$$
Similarly, we find $\|Y_j\| = 2^{\frac{N}{2}}$ and
$$
\|Z_jZ_k-Y_jY_k\|=\|Z_jZ_k+Y_jY_k\|=\|Y_jZ_k-Z_jY_k\|=\|Y_jZ_k+Z_jY_k\|=2^{\frac{N+1}{2}}.
$$
Thus the inequality~\eqref{eq:ineq_effective_hamiltonians} reduces to
$$
\sum_{j=1}^N \sqrt{a_j^2+b_j^2} + \sum_{j\sim k}\left( \sqrt{a_{jk}^2+b_{jk}^2} + \sqrt{\tilde a_{jk}^2+\tilde b_{jk}^2}\right) \leq 1.
$$

Now we prove the second part when both $X_j$ and $Y_j$ lie inside $\fg$ for all $j = 1,\dots,N$. Let us consider the drift Hamiltonian~\eqref{eq:qubit_drift}. Then it will have trivial projections to $V_{j_1,\dots,j_k}$ when $k\geq 3$. Subspaces $V_{j}$ can be trivially generated using the pulse group $G$ and the projection of $H_0$ to $V_{j,k}$ is equal to $(\beta X_{j}X_{k}+\gamma Y_{j}Y_{k}+\delta Z_jZ_k)$, when $j\sim k$. 

 Here $V_0=\{0\}$, so $\pi_0^\perp H_0=H_0$. Since all Hamiltonians in $\fg$ can be generated, it only remains to focus on Hamiltonians inside $\fg^\perp$. They well be generated by
$$
H_0^\perp:=\pi_{\fg^\perp}H_0=H_0-\omega\sum_{j=1}^N Z_j=\sum_{j\sim k}\bigl(\beta X_jX_k+\gamma Y_jY_k+\delta Z_jZ_k\bigr).
$$
Since $\tilde\cH=\conv(\Ad_G H_0^\perp)+\fg$, we first bound $\conv(\Ad_G H_0^\perp)$ from inside. We average $H_0^\perp$ over the subgroup of $G$ given by
$$
 \prod_{l\notin\{j,k\}} (SU(2)_l \otimes I_{\Z_N\setminus \{l\}}),
$$
i.e. the maximal subgroup that acts trivially on the $j$-th and $k$-th qubits. Then, after the averaging we obtain the Hamiltonian
$$
\beta X_jX_k + \gamma Y_jY_k + \delta Z_jZ_k.
$$
Next, we would like to find the maximal-volume ellipsoid that lies inside $\conv Ad_G (\beta X_jX_k + \gamma Y_jY_k + \delta Z_jZ_k)$. We can replace $G$ with the action of two copies of $SU(2)$ on qubits $j$ and $k$. This representation is irreducible. It was proven in~\cite{Barvinok2005} that for irreducible representations the maximal ellipsoid is a sphere centered at zero. Let us compute this sphere.

First we note that the orbit of the adjoint action of $SU(2)$ on $\fsu(2)$ can be identified with the orbit of the standard $SO(3)$ action on $\R^3$. Therefore, the orbit of the action of $SU(2)\times SU(2)$ on $V_{j,k}$ can be identified with the orbit of $SO(3)\times SO(3)$ on $\R^3\otimes \R^3$. The space $\R^3\otimes \R^3$ can be further identified with the space of $3\times 3$ real matrices. It would be natural to identify, for example, $X_jX_k$ with the matrix $E_{11}$ (the one whose only non-zero element is $(E_{11})_{11} = 1$), however, this would not give an isometry between the two spaces. In order to make this map an isometry with $\mathfrak{gl}(3,\R)$ endowed with the Frobenius norm, we need to further multiply all of the elements by their length in $\fsu(n)$, which, as computed in the first part of the proof, is given by $2^{\frac{N}{2}}$. For example, we have the identifications
\begin{align*}
    X_jX_k &\simeq 2^{\frac{N}{2}} E_{11},\\
    X_jY_k &\simeq 2^{\frac{N}{2}} E_{12},\\
    X_jZ_k &\simeq 2^{\frac{N}{2}} E_{13},\\
    Y_jX_k &\simeq 2^{\frac{N}{2}} E_{21},\\
    &\dots
\end{align*}
and so on. The action of $SO(3)\times SO(3)$ becomes
\begin{equation}
    \label{eq:so(3)-action}
    A\mapsto R_1AR_2^T.
\end{equation}
This orbitope is well known in the literature. Here is an explicit description from~\cite{Miranda1994} (see the first theorem on page 138).

\begin{proposition}
    Given a matrix $A\in \mathfrak{gl}(3,\R)$ denote by $\sigma_1(A)\geq \sigma_2(A)\geq \sigma_3(A) \geq 0$ its singular values. Consider the action~\eqref{eq:so(3)-action}. Then the corresponding orbitope consists of all matrices $X\in\mathfrak{gl}(3,\R)$, whose singular values satisfy
    \begin{align*}
        \sigma_1(X) &\leq \sigma_1(A),\\
        \sigma_1(X)+\sigma_2(X)+\sign (\det X)\sigma_3(X) &\leq \sigma_1(A)+\sigma_2(A)+\sign (\det A)\sigma_3(A),\\
        \sigma_1(X)+\sigma_2(X)-\sign (\det X)\sigma_3(X) &\leq \sigma_1(A)+\sigma_2(A)-\sign (\det A)\sigma_3(A).
    \end{align*}
\end{proposition}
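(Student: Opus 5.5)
The plan is to reduce the statement to a convexity theorem for a polytope, using the signed singular value decomposition and Kostant's convexity theorem for the symmetric pair $(\mathfrak{so}(3,3),\mathfrak{so}(3)\oplus\mathfrak{so}(3))$.

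\emph{Normal form.} Every $X\in\mathfrak{gl}(3,\R)$ can be written as $X=R_1\,\mathrm{diag}(s_1,s_2,s_3)\,R_2^T$ with $R_1,R_2\in SO(3)$. The signed singular values can be chosen so that
$$
s_1=\sigma_1(X),\quad s_2=\sigma_2(X),\quad s_3=\sign(\det X)\,\sigma_3(X),
$$
which gives $s_1\ge s_2\ge |s_3|$. To obtain this, take an ordinary SVD with $O(3)$ factors and absorb a determinant $-1$ by flipping the sign of the third column of one factor; this changes the sign of $s_3$.

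\emph{Reduction to diagonal matrices.} The orbitope $\mathcal O_A=\conv\{R_1AR_2^T\}$ is convex and invariant under the action \eqref{eq:so(3)-action}. Hence $X\in\mathcal O_A$ if and only if $D_X:=\mathrm{diag}(s_1,s_2,s_3)\in\mathcal O_A$. We may also replace $A$ by $D_A=\mathrm{diag}(a_1,a_2,a_3)$, its signed normal form. So the problem reduces to deciding which diagonal matrices lie in $\mathcal O_{D_A}$.

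\emph{The Weyl group.} Let $W$ be the group of signed permutations of $(a_1,a_2,a_3)$ with an even number of sign changes. This is the Weyl group of type $D_3$, i.e.\ of the restricted root system of the pair above, where $\mathfrak{a}$ is the space of diagonal matrices. Every $w\in W$ is realized by the action.
\begin{itemize}
\item A sign change $\mathrm{diag}(-1,-1,1)$ is realized by $R_1=\mathrm{diag}(-1,-1,1)$, $R_2=I$.
\item A permutation $P$ acts by $D\mapsto PDP^T$. If $\det P=-1$, use $R_1=R_2=-P\in SO(3)$; the two sign changes cancel.
\end{itemize}
Therefore $\conv(W\cdot a)\subset\mathcal O_{D_A}$.

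\emph{Converse via Kostant.} Let $p$ be the orthogonal projection onto diagonal matrices, so $p(X)=\mathrm{diag}(X_{11},X_{22},X_{33})$. Kostant's convexity theorem for this pair gives $p(\{R_1D_AR_2^T\})=\conv(W\cdot a)$. Since $p$ is linear, $p(\mathcal O_{D_A})=\conv(W\cdot a)$ as well. For a diagonal $D\in\mathcal O_{D_A}$ we have $p(D)=D$, so $D\in\conv(W\cdot a)$. Thus
$$
X\in\mathcal O_A\iff s(X)\in\conv(W\cdot a).
$$
If one prefers not to cite Kostant, I would prove the projection statement directly. Writing $(R_1D_AR_2^T)_{ii}=\sum_k a_k (R_1)_{ik}(R_2)_{ik}$, one would show that the matrices $\bigl((R_1)_{ik}(R_2)_{ik}\bigr)_{i,k}$ have convex hull equal to the convex hull of the even signed permutation matrices. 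This is a von Neumann trace inequality argument for $SO(3)$, and I expect it to be the main obstacle.

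\emph{Describing the polytope.} It remains to describe $\conv(W\cdot a)$ for a dominant point $x$, that is, one with $x_1\ge x_2\ge |x_3|$. For such $x$, membership holds iff $\langle\omega,x\rangle\le\langle\omega,a\rangle$ for the fundamental weights of $D_3$, namely $e_1$, $e_1+e_2+e_3$ and $e_1+e_2-e_3$ (up to positive scaling). This is the standard dominance-order description of Weyl-orbit polytopes. It follows from the support function
$$
\max_{w\in W}\langle \omega,w a\rangle=\langle\omega,a\rangle\quad\text{for dominant }\omega,
$$
together with the fact that the cone of dominant $\omega$ is generated by the fundamental weights.

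Substituting $x_1=\sigma_1(X)$, $x_2=\sigma_2(X)$, $x_3=\sign(\det X)\sigma_3(X)$, and the analogous values for $a$, the three inequalities are exactly the three inequalities of the proposition.
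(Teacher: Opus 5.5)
Your argument is correct. The paper itself does not prove this proposition; it only quotes it from \cite{Miranda1994}. So your proposal supplies a derivation where the paper has a citation.

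Your route runs through the symmetric pair $(\mathfrak{so}(3,3),\mathfrak{so}(3)\oplus\mathfrak{so}(3))$:
\begin{itemize}
\item $\mathrm{diag}\bigl(\sigma_1,\sigma_2,\sign(\det X)\,\sigma_3\bigr)$ is the representative of the orbit of $X$ in the dominant chamber of the Cartan subspace of diagonal matrices.
\item The Weyl group $W(D_3)$ of permutations with even sign changes is realized inside $SO(3)\times SO(3)$.
\item Kostant's theorem then shows that the orbitope meets the diagonal matrices exactly in $\conv(W\cdot a)$.
\item The three inequalities are the fundamental-weight inequalities $e_1$, $e_1+e_2\pm e_3$ of $D_3$.
\end{itemize}

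Compared with the bare citation, this explains where the $\sign(\det)$ terms come from: only an even number of sign changes is available inside $SO(3)\times SO(3)$. It also carries over verbatim to $SO(n)\times SO(n)$ acting on $n\times n$ matrices, and to any polar representation. The citation buys brevity. Note that your proof still outsources its one nontrivial step, the projection statement, to Kostant's theorem; for real matrices this is essentially Thompson's theorem on diagonal entries and singular values.

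Two small points to tighten. First, in the last step, say explicitly why testing dominant $\omega$ suffices when $x$ is dominant. For any $\omega$ with dominant representative $\omega^+$, one has $\langle\omega,x\rangle\le\langle\omega^+,x\rangle$ and $\max_{w}\langle\omega,wa\rangle=\langle\omega^+,a\rangle$.

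Second, your Kostant-free fallback aims at more than you need. It suffices to show that $\bigl((R_1)_{ik}(R_2)_{ik}\bigr)_{i,k}\,d\in\conv(W\cdot d)$ for each fixed $d$; these are the rank-one functionals. You do not need the convex hull of all such matrices to equal the convex hull of the even signed permutation matrices. That is a stronger claim which you should not rely on without proof.
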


We see that this gives (essentially) three linear inequalities for the singular values. In addition, we know that the Frobenius norm of a matrix $X$ satisfies
$$
\|X\|_2 = \sqrt{\sum_{i=1}^3 \sigma_i(X)^2}.
$$
Therefore, the problem of finding the largest inscribed sphere corresponds to finding a sphere in $\R^3$ with coordinates $\sigma_i(X)$ that satisfies the three linear constraints above and in addition $\sigma_1(X)\geq \sigma_2(X)\geq \sigma_3(X) \geq 0$. But then, it means that we only need to find the smallest distance between the origin and each of the hyperplanes defined by replacing each inequality with equality. We get the following possibilities for the limiting radius $r$ depending on the sign of $\det X$.
\begin{enumerate}
    \item $\sign \det X > 0$:
    $$
    r = \min\left\{\sigma_1(A),\frac{\sigma_1(A)+\sigma_2(A)+\sign (\det A)\sigma_3(A)}{\sqrt{3}},\frac{\sigma_1(A)+\sigma_2(A)-\sign (\det A)\sigma_3(A)}{\sqrt 2}\right\}
    $$
    \item $\sign \det X < 0$:
    $$
    r = \min\left\{\sigma_1(A),\frac{\sigma_1(A)+\sigma_2(A)+\sign (\det A)\sigma_3(A)}{\sqrt{2}},\frac{\sigma_1(A)+\sigma_2(A)-\sign (\det A)\sigma_3(A)}{\sqrt{3}}\right\}
    $$
    \item $\sign \det X = 0$:
    $$
    r=\min\left\{\sigma_1(A),\frac{\sigma_1(A)+\sigma_2(A)+\sign (\det A)\sigma_3(A)}{\sqrt{2}},\frac{\sigma_1(A)+\sigma_2(A)-\sign (\det A)\sigma_3(A)}{\sqrt{2}}\right\}
    $$
\end{enumerate}

\begin{remark}
    The distances to the constraint boundaries follow from
$$
        \min_{\substack{x_1+x_2+x_3=c\\x_1 \geq x_2 \geq x_3\geq0}} \|x\| = \frac{c}{\sqrt{3}}, \qquad \min_{\substack{x_1+x_2-x_3=c\\x_1 \geq x_2 \geq x_3\geq0}} \|x\| = \frac{c}{\sqrt{2}},
$$
for $c\geq0$, which can be understood geometrically.
\end{remark}

Taking the global minimum among all choices, we see that we must choose the radius of the sphere as
$$
r_{min} = \min\left\{\sigma_1(A),\frac{\sigma_1(A)+\sigma_2(A)-\sigma_3(A)}{\sqrt{3}}\right\}.
$$

We can now apply Lemma~\ref{lem:convexification}. To do so, we identify
$$
\beta X_jX_k + \gamma Y_jY_k + \delta Z_jZ_k \simeq 2^{\frac{N}{2}}(\beta E_{11}+\gamma E_{22}+\delta E_{33}) = : M.
$$
Since this matrix is diagonal, its singular values are equal to $2^{\frac{N}{2}}|\beta|$, $2^{\frac{N}{2}}|\gamma|$, $2^{\frac{N}{2}}|\delta|$. In particular,
$$
\sigma_1(M) = 2^{\frac{N}{2}} \max\{|\beta|,|\gamma|,|\delta|\}
$$
and
$$
\sigma_1(M)+\sigma_2(M)-\sigma_3(M) = 2^{\frac{N}{2}}\max\{|\beta|+|\gamma|-|\delta|,|\gamma|+|\delta|-|\beta|,|\delta|+|\beta|-|\gamma|\}.
$$

The ball obtained for each edge has radius $2^{N/2}R$. For an edge Hamiltonian
$$
H_{jk} :=\sum_{p,q=1}^3a_{pq,jk}P_{p,j}P_{q,k},\qquad (P_1,P_2,P_3)=(X,Y,Z),
$$
we have
$$
\|H_{jk}\|=2^{N/2}\sqrt{\sum_{p,q=1}^3a_{pq,jk}^2}.
$$
Lemma~\ref{lem:convexification} therefore gives the sufficient condition
$$
\sum_{j\sim k}\frac{\|H_{jk}\|}{2^{N/2}R}
=\frac1R\sum_{j\sim k}\sqrt{\sum_{p,q=1}^3a_{pq,jk}^2}\leq1.
$$
Adding any element of $\fg$ supplies arbitrary local coefficients, including when $\omega=0$, and proves the second part of the theorem.

\newpage

\bibliographystyle{alpha}
\bibliography{references}

\end{document}